\documentclass[sigconf, nonacm]{acmart}

\newcommand\vldbdoi{XX.XX/XXX.XX}
\newcommand\vldbpages{XXX-XXX}
\newcommand\vldbvolume{14}
\newcommand\vldbissue{1}
\newcommand\vldbyear{2020}
\newcommand\vldbauthors{\authors}
\newcommand\vldbtitle{\shorttitle} 
\newcommand\vldbavailabilityurl{}
\newcommand\vldbpagestyle{plain}

\usepackage{microtype}
\usepackage{graphicx}
\usepackage{subfigure}
\usepackage{booktabs} 

\usepackage{datetime}
\usepackage{url}
\usepackage[ruled, noend, vlined]{algorithm2e}

\usepackage{amsmath}
\usepackage{algpseudocode}
\algrenewcommand\algorithmicrequire{\textbf{Input:}}
\algrenewcommand\algorithmicensure{\textbf{Output:}}
\newcommand{\dist}{\operatorname{dist}}

\usepackage{float}
\usepackage{color}
\usepackage{multirow}
\usepackage{soul}
\usepackage{array}
\usepackage{subcaption}
\usepackage{soul}
\usepackage{balance}
\usepackage{enumitem}
\usepackage{graphicx}
\usepackage{kotex}
\usepackage{mathtools}
\usepackage{physics}

\usepackage{stmaryrd}
\usepackage{trimclip}
\usepackage{xcolor}
\usepackage{fontawesome}
\usepackage{needspace}

\usepackage{amsfonts}

\usepackage{pifont}

\makeatletter
\DeclareRobustCommand{\shortto}{%
  \mathrel{\mathpalette\short@to\relax}%
}

\newcommand{\short@to}[2]{%
  \mkern2mu
  \clipbox{{.3\width} 0 0 0}{$\m@th#1\vphantom{+}{\shortrightarrow}$}%
  }
\makeatother

\newcolumntype{L}[1]{>{\raggedright\let\newline\\\arraybackslash\hspace{0pt}}m{#1}}

\usepackage{tikz}
\usetikzlibrary{calc}

\newtheorem{theorem}{\bf Theorem}

\newtheorem{definition2}{\bf Definition}
\newtheorem{lemma2}{\bf Proposition}

\newtheorem{p-rule}{\bf Rule}

\usepackage{hyperref}

\let\oldnl\nl
\newcommand{\nonl}{\renewcommand{\nl}{\let\nl\oldnl}}

\newif\ifshowrevisions
\showrevisionsfalse

\definecolor{colorA}{HTML}{88CCEE}
\definecolor{colorB}{HTML}{CC6677}
\definecolor{colorC}{HTML}{DDCC77}
\definecolor{colorD}{HTML}{117733}
\definecolor{colorE}{HTML}{332288}
\definecolor{colorF}{HTML}{AA4499}
\definecolor{colorG}{HTML}{44AA99}
\definecolor{colorH}{HTML}{999933}
\definecolor{colorI}{HTML}{882255}
\definecolor{colorJ}{HTML}{661100}
\definecolor{colorK}{HTML}{6699CC}

\newcommand{\reviewColorA}{colorB}

\newcommand{\reviewnote}[2]{\colorbox{white}{%
\begin{minipage}{32pt}%
{\color{#1}#2}%
\end{minipage}%
}}
\usepackage{marginnote}

\ifshowrevisions
  \newcommand{\revisionstyle}[2][\reviewColorShared]{{\color{#1}#2}}
  \newcommand{\request}[2]{%
    \marginnote{\reviewnote{#1}{\textsf{\textbf{#2}}}}%
  }
\else
  \newcommand{\revisionstyle}[2][\reviewColorShared]{#2}   
  \newcommand{\request}[2]{}                               
\fi

\newcommand{\revisionBase}[3]{\request{#1}{#2}{\revisionstyle[#1]{#3}}}

\newcommand{\revisionA}[2]{\revisionBase{\reviewColorA}{#1}{#2}}

\newcommand{\FuncName}[1]{\textsc{{#1}}}

\usepackage{booktabs}
\usepackage{tabularx}
\usepackage{pifont}
\newcommand{\cmark}{\ding{51}}
\newcommand{\xmark}{\ding{55}}

\newif\ifFullVersion

\def\FullVersion{\let\ifFullVersion=\iftrue}
\def\ShortVersion{\let\ifFullVersion=\iffalse}
\FullVersion

\newcommand{\XJoin}{\FuncName{{XJoin}}\xspace}
\newcommand{\SimJoin}{\FuncName{{SimJoin}}\xspace}
\newcommand{\DiskJoin}{\FuncName{{DiskJoin}}\xspace}

\newcommand{\RoarGraph}{\FuncName{{RoarGraph}}\xspace}

\newcommand{\VectorJoin}{\FuncName{VectorJoin}}

\newcommand{\SIFTE}{\FuncName{SIFT-E}\xspace}
\newcommand{\SIFT}{\FuncName{SIFT}\xspace}
\newcommand{\GIST}{\FuncName{GIST}\xspace}
\newcommand{\Glove}{\FuncName{GloVe}\xspace}
\newcommand{\NYTimes}{\FuncName{NYTimes}\xspace}
\newcommand{\FMNIST}{\FuncName{FMNIST}\xspace}
\newcommand{\COCO}{\FuncName{COCO}\xspace}
\newcommand{\ImageNet}{\FuncName{ImageNet}\xspace}
\newcommand{\LAION}{\FuncName{LAION}\xspace}

\newcommand{\ES}{\FuncName{INLJ}\xspace}

\newcommand{\ESMIA}{\FuncName{MI+Adapt}\xspace}

\newcommand{\NLJ}{\FuncName{NLJ}\xspace}
\newcommand{\INLJ}{\FuncName{INLJ}\xspace}

\newcommand{\HWS}{\FuncName{HWS}\xspace}
\newcommand{\SWS}{\FuncName{SWS}\xspace}
\newcommand{\MI}{\FuncName{MI}\xspace}

\newcommand{\MIA}{\FuncName{MI+Adapt}\xspace}
\newcommand{\MISWS}{\FuncName{MI+SWS}\xspace}

\newcommand{\SWSH}{\FuncName{SWS+Hybrid}\xspace}
\newcommand{\MIH}{\FuncName{MI+Hybrid}\xspace}

\SetKw{Continue}{continue}
\SetKw{Break}{break}

\newcommand{\CASE}[1]{\STATE \textbf{case} #1\textbf{:} \begin{ALC@g}}
\newcommand{\ENDCASE}{\end{ALC@g}}

\newcommand{\DEFAULT}{\STATE \textbf{default:} \begin{ALC@g}}
\newcommand{\ENDDEFAULT}{\end{ALC@g}}
\newcommand{\DEFAULTLINE}[1]{\STATE \textbf{default:} }

\newcommand{\bluecomment}[1]{}
\newcommand{\redcomment}[1]{\tcc{#1}}

\newcommand{\nata}[1]{#1}
\newcommand{\sigmod}[1]{#1}

\usepackage{etoolbox}

\definecolor{colorMoved}{HTML}{00695C}   
\definecolor{colorPlanned}{HTML}{6A4FA3} 
\DeclareRobustCommand{\moved}[1]{{\color{colorMoved}#1}}
\DeclareRobustCommand{\planned}[1]{{\color{colorPlanned}#1}}

\newcommand{\todo}[1]{#1}

\newcommand{\kkim}[1]{#1}

\ifFullVersion

\else

\fi

\LinesNumbered
\SetAlgoCaptionSeparator{.}
\SetKwProg{Fn}{Function}{}{end}
\SetKwFor{uForEach}{foreach}{do}{}
\SetStartEndCondition{ (}{) }{)}
\SetNlSty{texttt}{}{:}
\SetArgSty{}

\begin{document}


\title{Fast Approximate Vector Joins via Offline-Online Co-Design}

\author{Kyoungmin Kim}
\affiliation{%
  \institution{EPFL}
}
\email{kyoung-min.kim@epfl.ch}

\author{Lennart Roth}
\affiliation{%
  \institution{EPFL}
}
\email{lennart.roth@epfl.ch}

\author{Liang Liang}
\affiliation{%
  \institution{EPFL}
}
\email{liang.liang@epfl.ch}

\author{Anastasia Ailamaki}
\affiliation{%
  \institution{EPFL}
}
\email{anastasia.ailamaki@epfl.ch}


\begin{abstract}
\revisionA{}{
Approximate vector join is a fundamental primitive for AI-driven data management. Prior methods reduce online search cost through work sharing across queries, but remain limited by hard reuse of only final join results, separate query/data-side indexes over vectors that already live in a shared embedding space, and recall loss on difficult queries that lie far from data distributions. We present an offline–online co-design for approximate vector join that addresses these limitations through three complementary ideas. First, we introduce an encoder-centric merged index that jointly organizes vectors from the same embedding space and offloads the main bottleneck of vector join to offline index construction.
Second, we propose soft work sharing that generalizes reuse from final join results to useful search effort. 
Third, we develop an adaptive hybrid search that selectively improves recall for difficult queries while avoiding unnecessary overhead on easy ones. 
Together, these techniques provide \sigmod{a modular yet principled design} that improves both efficiency and robustness across diverse datasets. By treating the underlying graph index as a black box, it can generalize across arbitrary graph indexes.
}


\end{abstract}

\maketitle

\pagestyle{\vldbpagestyle}

\ifdefempty{\vldbavailabilityurl}{}{
\vspace{.3cm}
\begingroup\small\noindent\raggedright\textbf{PVLDB Artifact Availability:}\\
The source code, data, and/or other artifacts have been made available at \url{\vldbavailabilityurl}.
\endgroup
}

\vspace*{-0.2cm}
\section{Introduction}\label{sec:intro}

Similarity joins over vector embeddings are a fundamental operation in modern AI-driven data management and analytics \cite{wang2013scalable, pansurvey}.
Many real-world applications require identifying all pairs of semantically similar objects across two collections, rather than retrieving only the top-$k$ nearest neighbors for each query \cite{DiskJoin}. 
\revisionA{R2.D4}{An example is cross-source product matching, where two catalogs are encoded by the same embedding model and threshold-based vector join retrieves all candidate pairs (detailed in Appendix A) \cite{li2020deep, tracz2020bert}.}
Near-duplicate detection in image, video, or document collections relies on self-joins to find all items whose embeddings fall within a small distance threshold \cite{wang2013scalable}. 
In data integration and record linkage, vector joins are used to match entities across heterogeneous sources, such as joining product descriptions, customer profiles, or scientific records based on semantic similarity \cite{clusterjoin, wang2024xling}. 
In recommendation, fraud detection, and security analytics, vector joins enable discovering clusters of closely related behaviors or anomalously similar activity patterns \cite{wang2013scalable}. 
They also underlie hybrid vector-relational queries, enabling filters, joins, and aggregations over both symbolic and semantic attributes \cite{zhang2023vbase, sanca2024efficient, chen2024singlestore, pgvector, chronis2025filtered}.

Given two sets of vectors $X$ and $Y$ (assume $|X| \leq |Y|$) and a distance threshold $\theta$, the goal of threshold-based vector join is to find all similar vector pairs $(x,y)$ such that $x \in X, y \in Y, dist(x, y) < \theta$ for a distance function $dist(\cdot,\cdot)$. 
\revisionA{}{We focus on the \emph{in-memory} vector-join setting, where the primary bottleneck is distance computation over high-dimensional vectors rather than disk I/O \cite{ootomo2024cagra, DiskJoin}. Accordingly, our goal is to reduce the amount of online distance computations for candidate pairs.

A straightforward way to evaluate threshold-based vector joins is to probe an approximate nearest neighbor (ANN) index over the larger set $Y$ (simply data) independently for each vector in $X$ (simply query), i.e., an index nested-loop style execution \cite{DuckDBV, pgvector, chen2024singlestore}.
\sigmod{As in top-$k$ search, graph-based indexes such as HNSW \cite{HNSW} and NSG \cite{NSG} have been used due to their impressive performance \cite{wang2024xling, SimJoin}.} 
Then, for each query, a threshold-based search first reaches an \emph{in-range region}, a data region containing points within distance $\theta$ of the query (via greedy search), and then expands within that region to collect all matches (post-greedy expansion) \cite{SimJoin}.

\sigmod{We observe that the greedy search can take up to 99\% of the total time especially under small thresholds (Figure \ref{fig:profiling_exp}), traversing a large portion of the index before hitting an in-range point.
Moreover, each search starts from a designated index entry, resulting in \emph{redundant} traversals that could have been reused for similar queries.
Recent work \cite{SimJoin} aims to mitigate this} through \emph{work sharing}: similar queries are processed in a certain order \sigmod{(obtained from a query-side index)} so that a later query can reuse information from an earlier one. However, this has been only done in \emph{hard} form, reusing only the final {in-range} points found as the seeds to later query traversals. This makes reuse restrictive, since useful effort spent reaching nearby but out-of-range (simply out-range) points is discarded. 
\sigmod{Moreover, since query- and data-side indexes are maintained separately, the online phase must rediscover the proximity relationships between the two vector sets from scratch for every join operation, incurring repetitive overhead.}
Finally, even after reaching an in-range region, traversing only within that region \cite{SimJoin} may miss other disconnected in-range regions and lose recall. This is pronounced for out-of-distribution (OOD) queries where the query and data distributions barely overlap, and closest data points to a query are scattered over the data space \cite{chen2024roargraph}.



\sigmod{We can then summarize these into three orthogonal issues:}
(1) \emph{redundancy across similar queries} \sigmod{that has been addressed only in a limited way \cite{SimJoin},} (2) \emph{redundancy across joins} \sigmod{that repeatedly rediscovers query-data relationships,} 
and (3) \emph{reachability failure on difficult queries} that, even after reaching one in-range region, the search may fail to reach other disconnected in-range regions.

This paper addresses these issues through a unified offline–online co-design, \sigmod{organized around a single principle: shift as much join effort as possible to offline index construction, and handle only what cannot be anticipated offline at query time, as efficiently and robustly as possible, while remaining simple and index-agnostic.
This principle naturally gives rise to three components. 
To mitigate (2), we propose a merged index (offline) that organizes vectors from a shared embedding space in one unified graph, eliminating the need to rediscover cross-set geometry at query time and offloading initial in-range region discovery to index construction in a \emph{threshold-agnostic} way. 
To mitigate (1), we introduce soft work sharing (online), which generalizes reuse from only final join results to useful search effort produced during traversal (closest data point found so far, which may be out-range). Importantly, soft work sharing remains beneficial even when merged indexing is unavailable as it directly reduces redundant online traversal across similar queries regardless of index organization. 
To mitigate (3), we propose an OOD-aware adaptive hybrid search (online) that traverses out-range points as bridges to disconnected in-range regions, but only for queries predicted OOD using a lightweight query-data overlap signal provided by the merged index. 
Each component addresses what the previous one leaves unsolved: the merged index eliminates redundant rediscovery of cross-set geometry, soft work sharing maximizes reuse of remaining online traversal, and hybrid search recovers the recall that neither preparation nor reuse can guarantee.
We integrate them in a unified execution framework that captures existing state of the art \cite{SimJoin}, enables modular design, fair comparison, and clean ablations across a well-defined design space.}
In summary, our contributions are as follows.

\begin{itemize}[topsep=0pt, leftmargin=0.5cm]

\item \nata{We argue that the current bottleneck in vector joins is not just search efficiency, but a misalignment between the data's natural embedding geometry and the traditional table-centric index organization. We propose a unified offline–online co-design that resolves this misalignment; by capturing \SimJoin \cite{SimJoin} as a special case within the same pipeline, the framework also enables principled comparisons and clean ablations across the full design space\moved{, satisfying all design criteria in Table \ref{tab:novelty} for the first time}.}

\item \nata{We introduce soft work sharing, which redefines \emph{reusable work} as traversal effort rather than just result sets, reducing latency by up to 3x over hard work sharing while also cutting memory usage for large thresholds.}

\item \nata{We propose merged index, an offline encoder-centric strategy that treats the embedding space as the primary join substrate, effectively reducing initial join search to a constant-time neighborhood lookup, reducing latency by up to 44.1x.}

\item \nata{We identify disconnected in-range regions as a structural recall failure mode for OOD queries, and introduce an OOD-aware adaptive hybrid search that selectively bridges these regions — improving recall by up to 43\% on OOD datasets while imposing no overhead on in-distribution queries.}





\item We validate the proposed design through comprehensive experiments. Across eight datasets, two graph indexes, and scalability tests up to 10M vectors, the results show that \sigmod{our three techniques are simple} yet effective and complementary building blocks for efficient approximate vector join.
\end{itemize}

}

\revisionA{}{
In the remainder of the paper, Section \ref{sec:related} explains related work, Section \ref{sec:vector_joins} defines the vector join problem and explains basic approaches, Section \ref{sec:ours} describes our \sigmod{unified} vector join framework and techniques, Section \ref{sec:exp} presents experimental study, and Section \ref{sec:conclusion} concludes the paper.
}

\section{Related Work}\label{sec:related}

\revisionA{}{
Prior work on vector search and vector join covers different parts of the overall design space, but does not provide a unified solution to approximate threshold-based vector join with broad work reuse, encoder-centric indexing, and selective handling of difficult OOD queries. Table~\ref{tab:novelty} summarizes the main distinctions. Existing methods either focus on single-query ANN search (ANNS), reduce unnecessary query searches without cross-query reuse, reuse only hard in-range results under separate query/data indexes, optimize disk-based self-join and I/O reuse, or address OOD search through a specialized query-guided index rather than vector join. 
In contrast, our framework exposes a broader offline--online co-design space over existing graph indexes, rather than inventing yet another bespoke index structure that only works in narrow settings. 
To the best of our knowledge, we are the first to support fast threshold-based vector join with soft cross-query work sharing, encoder-centric merged indexing, and robust accuracy with OOD-aware adaptive traversal in one unified framework.}
\nata{The absence of this combination is not accidental. Query- and data-side indexes evolved from the relational table abstraction, which treats each vector set as a separate entity. This table-centric framing made it conceptually natural to build separate indexes, obscuring the fact that shared-encoder vectors already inhabit a common geometric substrate.}


\begin{table*}[t]
\centering
\small
\setlength{\tabcolsep}{4pt}
\caption{\revisionA{Novelty}{Comparison of prior methods and ours along the key design dimensions of approximate threshold-based vector join.}}
\vspace*{-0.2cm}
\label{tab:novelty}
\begin{tabular}{lccccccc}
\toprule
\textbf{Method} &
\textbf{Thr.-} &
\textbf{Opt for reducing} &
\textbf{Work} &
\textbf{OOD} &
\textbf{Query awareness in} &
\textbf{Index / data} &
\textbf{Generalizability} \\
&
\textbf{join} &
\textbf{dist. computations} &
\textbf{sharing} &
\textbf{handling} &
\textbf{index construction} &
\textbf{organization} &
\textbf{across indexes} \\
\midrule
Top-$k$ ANNS     & \xmark & \cmark & \xmark & \xmark & \xmark\ (ad-hoc queries) & Data-side index only & \xmark \\
\XJoin \cite{wang2024xling}            & \cmark & \cmark & \xmark & \xmark & \xmark & Data-side index only & \cmark \\
\SimJoin \cite{SimJoin}           & \cmark & \cmark & \cmark & \xmark & For query-side index only & Separate query/data index & \cmark \\
\DiskJoin \cite{DiskJoin}         & \cmark & \xmark & \cmark (I/O reuse) & \xmark & \xmark\ (no separate queries) & Self-join-only index & \xmark \\
\RoarGraph \cite{chen2024roargraph}        & \xmark & \cmark & \xmark & \cmark & From sampled queries only & Query-guided data index & \xmark \\
\midrule
\textbf{Ours}    & \textbf{\cmark} & \textbf{\cmark} & \textbf{\cmark} & \textbf{\cmark} & \textbf{Jointly with data} & \textbf{Merged from construction} & \textbf{\cmark} \\
\bottomrule
\vspace*{-0.5cm}
\end{tabular}
\vspace*{-0.2cm}
\end{table*}

\vspace*{-0.2cm}
\subsection{Non-Graph-Based ANN Index}

A wide range of index structures has been developed for ad-hoc queries. 
Early research was dominated by Locality-Sensitive Hashing (LSH) \cite{datar2004locality, gan2012locality}, which maps similar vectors into hash buckets with high probability, so queries only probe a few buckets instead of scanning all data \cite{jafari2021survey}.
However, this approach is not effective for high-dimensional vectors due to curse of dimensionality \cite{bohm2001searching}; errors in hashing similar vectors into the same bucket increase with the dimension.
Inverted file (IVF) and quantization-based approaches cluster the vector space into many buckets and search only a few relevant buckets per query \cite{baranchuk2018revisiting, jegou2010product, guo2020accelerating}.
Such ANN indexes optimized for single-query retrieval are incorporated into many vector processing systems and libraries (e.g., FAISS \cite{douze2024faiss} and Milvus \cite{wang2021milvus}), including the recent graph indexes below.

\vspace*{-0.2cm}
\subsection{Graph-Based ANN Index}


\revisionA{}{
Graph-based ANN indexes such as HNSW \cite{HNSW}, Vamana \cite{DiskANN}, and NSG \cite{NSG} are the dominant substrate for high-performance vector search due to their strong efficiency--accuracy trade-off \cite{HNSW, NSG, douze2024faiss}. They are, however, designed primarily for single-query top-$k$ retrieval. In-memory graph traversal is typically efficient enough that the main bottleneck becomes repeated distance computation over high-dimensional vectors, rather than disk I/O, which is exactly the regime we target in this paper. 
\revisionA{R1.D3}{Existing graph-based ANN work also includes techniques for adaptive termination and query-specific search control. For example, DARTH \cite{DARTH} proposes ML-based early stopping by estimating recall during search, but it assumes a fixed-start single-query ANN regime; this is not directly portable to our work-sharing setting, where starting points vary dynamically across queries. \sigmod{It also targets top-$k$ queries only, not threshold-based ones.
}}

Recent work has also begun to address OOD queries in graph search \cite{chen2024roargraph, jaiswal2022ood, VIBE}. \RoarGraph \cite{chen2024roargraph} incorporates sampled historical queries into index construction through a specialized query-guided graph, improving OOD retrieval for ANN search. Our setting is different in two ways: we study threshold-based vector join rather than single-query top-$k$ search, and 
we use online hybrid search over existing graph indexes rather than building a dedicated index.
}


\vspace*{-0.2cm}
\subsection{Approximate Vector Join}

\revisionA{}{
The most basic execution model for vector join is analogous to nested-loop or index nested-loop join, exploited by many vector systems including pgvector \cite{pgvector}, VBASE \cite{zhang2023vbase}, SingleStore-V \cite{chen2024singlestore}, DuckDB \cite{DuckDBV}, and more \cite{wang2021milvus, pinecone, qdrant}. A naive nested-loop join computes all pairwise distances exactly, which is prohibitive at scale. Index nested-loop style execution probes a data-side ANN index independently for each query and is therefore much faster, but it repeats traversal even when many queries are similar.

\XJoin \cite{wang2024xling} is an early attempt to reduce this overhead. Rather than sharing work across queries, it filters queries that are unlikely to have enough results. Thus, it still performs its ANN traversal largely independently and may degrade recall.
\SimJoin \cite{SimJoin} advances this line of work. It introduces cross-query work sharing by ordering queries through a minimum spanning tree (MST) over a query-side graph index and reusing the join results of an executed query as starting seeds for its descendants. It significantly outperforms \XJoin by reducing redundant traversal across similar queries rather than only pruning which queries to execute. However, its reuse is still restrictive: it caches only in-range points and discards useful traversal effort spent reaching nearby but out-range points. 
\sigmod{Moreover, because it maintains separate query- and data-side indexes, the online phase must rediscover the proximity relationships between the two sets for every join operation.} 

\DiskJoin \cite{DiskJoin} studies a different setting of SSD-resident self-join. Its main contribution is I/O reuse through bucket scheduling and disk-aware processing, rather than in-memory graph traversal and cross-query reuse. We therefore view \DiskJoin as complementary to our work rather than a direct baseline for in-memory setup.
}




There are also distributed and parallel solutions for similarity joins, such as MAPSS \cite{wang2013scalable} and ClusterJoin \cite{clusterjoin} which are MapReduce-based join frameworks. 
\todo{It is an interesting future work to extend our framework to disk-based, distributed, and parallel environments.}

\vspace*{-0.2cm}
\subsection{Softening Hard Decisions in ML}

\revisionA{Novelty}{Our soft work sharing was inspired by prior work in ML, which suggests that replacing a hard mechanism with a soft one can yield substantial practical gains, while still being conceptually simple. Label smoothing \cite{DBLP:conf/cvpr/SzegedyVISW16, DBLP:conf/cvpr/ZhongC0J21} softens one-hot labels and improves generalization and calibration, knowledge distillation \cite{DBLP:journals/corr/HintonVD15, DBLP:conf/icml/ChandrasegaranT22} replaces hard supervision with soft teacher targets, allowing the student to exploit richer inter-class similarity information than hard labels alone, and Gumbel-Softmax \cite{DBLP:conf/iclr/JangGP17} replaces hard categorical choices with a differentiable relaxation that substantially improves optimization for discrete decisions \cite{DBLP:conf/iclr/MaddisonMT17}. 
\nata{These examples support our intuition: just as soft labels let a learner benefit from `near-misses' between classes, soft work sharing lets queries benefit from the traversal effort of their predecessors — effort that hard work sharing discards — broadening what information produced by search is worth reusing and generalizing naturally across indexes.}
}


\vspace*{-0.2cm}
\section{Vector Joins}\label{sec:vector_joins}


This section defines threshold-based vector join and introduces the basic execution models used throughout the paper. 


\vspace*{-0.2cm}
\subsection{Problem Definition}\label{subsec:vector_joins:problem}




\begin{definition2}
\label{def:vector_join}
\cite{SimJoin}
\textbf{(Vector Join)} 
Given two sets of vectors $X$ and $Y$ in $\mathbb{R}^d$, a distance function $\dist(\cdot,\cdot)$, and a distance threshold $\theta$, the vector join is to find the set of all similar vector pairs whose distances are within the threshold, i.e., $X \Join_{dist < \theta} Y = \{(x, y) | x \in X, y \in Y, dist(x, y) < \theta\}$.
\end{definition2}
\vspace*{-0.2cm}


We assume without loss of generality that $|X| \le |Y|$, and refer to $X$ as the \emph{query} set and $Y$ as the \emph{data} set. 
We often refer to a vector as a \emph{point} in vector space.

Unlike ad-hoc top-$k$ search, threshold-based vector join is typically executed over a predefined batch of query vectors, similar to a join key column in relational joins. This makes it natural to assume a predefined index on queries as in \cite{SimJoin}.
In practice, vector joins are often executed in batch mode over large query sets, such as offline analytics pipelines, periodic data cleaning jobs, or hybrid vector-relational processing \cite{DuckDBV, chen2024singlestore, zhang2023vbase}. Users can join two predefined vector sets, and each vector can have attributes corresponding to the structured metadata of vectors.

\moved{\textbf{Answer to: Isn't merged index used for a single join only? Wouldn't it be expensive to build a merged index for every pairwise join between $N > 2$ sets?} Repeated joins over a maintained pair of sets are the norm in this workload model, not the exception. In cross-source product matching (Section \ref{sec:intro}, Appendix A), for example, a marketplace joins a maintained catalog $Y$ against incoming seller batches $X$ under the same encoder, exercising three reuse axes: (i) the pair is re-joined on a schedule as data evolves, with the index maintained incrementally (Section \ref{subsec:ours:mi}); (ii) the join runs at many thresholds on the same data---a tight $\theta$ to auto-accept high-confidence pairs, and a looser $\theta$ to generate more candidates for a reranker or human review---reusing the same \emph{threshold-agnostic} index without rebuilding; and (iii) the same index also serves ordinary top-$k$ search on either set (Section \ref{subsec:exp:revision_single_set}). For multiple same-encoder sets, one union index with per-vector set labels serves all pairwise joins, avoiding $N(N{-}1)/2$ pairwise structures. The one-time offline indexing cost is thus amortized over many joins and searches, not charged to a single query.}

Threshold-based joins are also more sensitive than top-$k$ search to local density and data skew: dense regions may contain many in-range points, while sparse regions may contain very few. Fixed-$k$ retrieval does not reflect this intrinsic structure well, since it may miss relevant matches in dense regions or return unnecessary candidates in sparse ones. This requires additional search with increased $k$ or discarding some results \cite{SimJoin, zhang2023vbase}. Threshold-based joins, in contrast, directly capture semantic proximity under a consistent decision rule.


\vspace*{-0.1cm}
\begin{definition2}
\textbf{(Approximate Vector Join)} 
Approximate vector join is to find the threshold-qualified vector pairs in an approximate way to achieve higher efficiency.
\end{definition2}
\vspace*{-0.1cm}

As in ANN search, we measure the quality of an approximate vector join by \emph{recall}, the fraction of returned join pairs relative to the exact result $X \Join_{\dist<\theta} Y$. In the paper, we focus on graph-based ANN indexes \cite{HNSW, NSG} since they offer the best efficiency--recall trade-off among existing vector-search families and are used by the current state of the art in approximate vector join \cite{SimJoin}.

\vspace*{-0.1cm}
\begin{definition2}
\label{def:graph}
\textbf{(Graph-based Index and Search)} 
A graph-based index $G$ is a triple $(V,E,s)$ where $V$ is the set of nodes representing vectors, $E$ is the set of directed edges between nodes, forming a graph, and $s$ is the designated entry point for search.
\end{definition2}
\vspace*{-0.1cm}

We use $(u,v)$ to denote a directed edge from node $u$ to node $v$, and $v$ is an (outgoing) \emph{neighbor} of $u$. 
Given a query $x$, graph-based search starts from $s$ and traverses the graph toward points close to $x$, computing query-specific distances along the way. The underlying graph $(V, E)$ is a proximity graph.

In this paper, we use $G_A$ to denote a pre-built graph index over vector set $A$. Depending on the execution mode, the framework may use separate indexes for the query and data sets, or a single merged index over both. Regardless of that choice, search is still driven by local graph traversal and repeated query-specific distance computations.

\vspace*{-0.2cm}
\subsection{Basic Execution Models}\label{subsec:vector_joins:execution_models}

This section summarizes the basic execution models for vector join.


\subsubsection{Nested Loop Join (NLJ)} 
The naive baseline computes all pairwise distances between $X$ and $Y$, with complexity $O(|X||Y|d)$ for $d$-dimensional vectors. It produces the exact join result, but becomes prohibitive for large datasets and high-dimensional embeddings.

\subsubsection{Index Nested Loop Join (INLJ)} 
Each query $x\in X$ independently probes an index $G_Y$ on the data $Y$. \sigmod{Once an in-range region is found via greedy search (navigating closer to the query), it traverses within that region via BFS \cite{SimJoin}.}
This is much faster than NLJ, but it still repeats graph traversal from the fixed entry $s_Y$ for every query, even when many queries are similar. This repeated online work is the main motivation for work sharing.

\subsubsection{INLJ with Work Sharing (WS)}
The state of the art augments INLJ with work sharing by processing similar queries in a coordinated order and reusing information from one query for another \cite{SimJoin}. It first constructs an MST over query graph $G_X$, which takes negligible time during vector join, and uses this MST to order queries starting from its root. Once a parent query completes, its approximate join results are cached and used as traversal seeds for its child queries.

The benefit of reusing the parent $p$'s join results for its child $x$ is approximately quantified as their distance $dist(x, p)$, from the insight that more similar queries (smaller distance) have more similar join results (larger benefits). Since the MST over the queries minimizes the sum of such inter-query distances, it maximizes the total benefits among possible query orderings. Moreover, when building this MST, not only the query-graph edges but every $(s_Y, x)$ for each $x \in X$ is used in order to 1) ensure the connectivity of MST and 2) let a query fall back to start from the entry $s_Y$ if it is closer to the query than the parent. $s_Y$ is regarded as the root of MST.

\vspace*{-0.2cm}
\subsection{Why Not Hash-Join-Like Designs?}\label{subsec:vector_joins:why_not_hash_join}

Although query ordering and work sharing reduce traversal overhead, vector join remains fundamentally different from classical relational hash join because distance computation is query-specific. Even if multiple queries reach the same point $y$, each query $x$ must still compute $\dist(x,y)$ and compare it with the threshold $\theta$. This repeated high-dimensional distance evaluation is the main reason why vector joins have so far been dominated by nested-loop-like or index-nested-loop-like designs.

At the same time, this observation also suggests what is missing (Table \ref{tab:rdbms_analogy}). Existing methods still spend online effort to discover an initial in-range region for each query. A more aggressive design would shift part of this effort to the build phase, so that runtime search can begin from a much more relevant region, hopefully achieving \emph{constant-time lookup, i.e., in-range-region discovery}. Our merged index moves toward this design point by encoding cross-set geometry during index construction. 

\begin{table}[t]
\centering
\small
\setlength{\tabcolsep}{4pt}
\caption{\revisionA{Novelty}{Relational analogy for vector joins.}}
\label{tab:rdbms_analogy}
\vspace*{-0.2cm}
{
\begin{tabular}{lccc}
\toprule
& \multicolumn{2}{c}{\textbf{Relational join cost}} & \textbf{Has vector-join} \\
\cmidrule(lr){2-3}
\textbf{Join type} & \textbf{Build} & \textbf{Probe} & \textbf{counterpart?} \\
\midrule
Nested loop join       & --               & High    & \cmark \\
Index nested loop join & High (offline)   & Medium & \cmark \\
Hash join              & Medium (online)  & Low    & \xmark \\
\bottomrule
\vspace*{-0.2cm}
\end{tabular}
}
\vspace*{-0.5cm}
\end{table}

\vspace*{-0.2cm}
\section{Fast Approximate Vector Joins via Offline--Online Co-Design}\label{sec:ours}



\revisionA{}{

We now present our offline--online co-design for fast approximate vector join. The key idea is to address three orthogonal issues identified in Section~\ref{sec:intro}: \emph{redundancy across similar queries}, \emph{redundancy across joins} that repeatedly rediscovers cross-set geometry, and \emph{reachability failure on difficult queries} whose valid matches lie in disconnected in-range regions. To do so, we combine three components in one unified execution pipeline: \emph{soft work sharing} for online reuse, an \emph{encoder-centric merged index} for offline work offloading, and an \emph{OOD-aware adaptive hybrid search} for the remaining hard cases. 
These components are complementary \sigmod{and form a coherent design: the merged index is the structural foundation that eliminates redundant cross-set geometry discovery and provides the OOD signal needed for adaptive hybrid search; soft work sharing then maximizes reuse of whatever online traversal remains; and hybrid search fills the residual recall gap that neither offline preparation nor reuse can resolve.}
\nata{This stands in contrast to prior work such as \cite{SimJoin}, which optimizes the online search phase while assuming a fixed, table-centric index organization; our key departure is to change the index itself — specifically to empower the join.}
}

\vspace*{-0.2cm}
\subsection{Unified Execution Framework}\label{subsec:ours:framework}

Algorithm~\ref{alg:vector_join} shows the end-to-end framework. Given query vectors $X$, data vectors $Y$, graph indexes, threshold $\theta$, and configuration parameters \texttt{cfg} (Table \ref{tab:cfg}), the framework first loads either separate indexes $\{G_X,G_Y\}$ or a merged index $G_{X\cup Y}$, determines a query processing order, and initializes a query-level cache. It then processes each query in four steps: (1) construct a seed set, (2) perform greedy search to reach an initial in-range point, (3) determine whether hybrid expansion is needed, and (4) perform post-greedy expansion to collect the remaining in-range points (Figure \ref{fig:search_phases} illustrates this search procedure). Finally, it updates the cache according to the selected work-sharing mode. This modular structure makes the algorithmic differences between separate vs.\ merged indexing, hard vs.\ soft work sharing, and BFS vs.\ hybrid expansion explicit within one common pipeline. 
\nata{Figure \ref{fig:positioning} maps this design space; \cite{SimJoin} occupies the dashed lines (separate indexes, hard work sharing, BFS), while our full system extends along all three dimensions (shown in solid lines).}
\footnote{\moved{\textbf{Answer to: Why not used the original implementation of \SimJoin?}} We referred to the algorithms in \cite{SimJoin} to implement their approach within our framework but with several modifications for clarity and correctness. 
For example, their algorithms failed to track already-visited data points, which caused the same distance computation to be performed multiple times.
As their code was not publicly available, we reimplemented their method within our framework; we achieved comparable latency (around 1ms per query on average) on the datasets we used and under small thresholds, and speedup when using their hard work sharing.
}

\begin{algorithm}[t]
\caption{\revisionA{}{\VectorJoin}}\label{alg:vector_join}
\small
\KwIn{Query vectors $X$, data vectors $Y$, index retriever $G$, threshold $\theta$, configuration \texttt{cfg}}
\KwOut{Join result set $R$}
    Load $\{G_X,G_Y\}$ or $G_{X\cup Y}$ according to \texttt{cfg.merged} \\
    $O \leftarrow \textsc{OrderQueries}(G,\texttt{cfg})$ \\
    Initialize query-level cache $C$, result set $R\leftarrow \emptyset$ \\
    \ForEach{$q\in O$}{
      $S \leftarrow \textsc{SeedPhase}(q,C,G,\texttt{cfg})$ \\
      $(I,V)\leftarrow \textsc{GreedySearchPhase}(q,S,Y,G,\theta,\texttt{cfg})$ \\
      $\mathit{hybrid}\leftarrow \textsc{CheckHybrid}(q,G,\texttt{cfg})$ \\
      $I \leftarrow \textsc{ExpansionPhase}(q,\mathit{hybrid},I,V,Y,G,\theta,\texttt{cfg})$ \\
      $R \leftarrow R \cup I$; $\textsc{UpdateCache}(q,C,I,V,\texttt{cfg})$ \\
    }
    \Return{$R$}
\end{algorithm}

\begin{table}[t]
\centering
\small
\setlength{\tabcolsep}{4pt}
\caption{\revisionA{}{Configuration parameters in \texttt{cfg}.}}
\label{tab:cfg}
\vspace*{-0.2cm}
\begin{tabular}{p{0.26\columnwidth}p{0.70\columnwidth}}
\toprule
\textbf{Field} & \textbf{Meaning} \\
\midrule
\texttt{merged} & Whether the index is merged or not. \\
\texttt{work\_sharing} & Work-sharing mode: \texttt{none}/\texttt{hard}/\texttt{soft}/\texttt{self}. \\ 
\texttt{force\_hybrid} & Always enable hybrid search. \\
\texttt{adapt\_hybrid} & Enable hybrid search only when the query is predicted OOD. \\
\texttt{greedy\_beam\_size} & Max queue size in greedy search (default: 256). \\
\texttt{conv\_threshold} & Threshold for determining convergence in greedy search (default: 10, \sigmod{sensitivity analysis in Appendix D due to space limitation}). \\
\texttt{hybrid\_beam\_size} & Max queue size in hybrid search for out-range points (default: 256, \sigmod{sensitivity analysis in Figure \ref{fig:latency_recall_exp}}). \\
\bottomrule
\end{tabular}
\end{table}

\begin{figure}[h!]
\centering
\includegraphics[width=1.0\columnwidth]{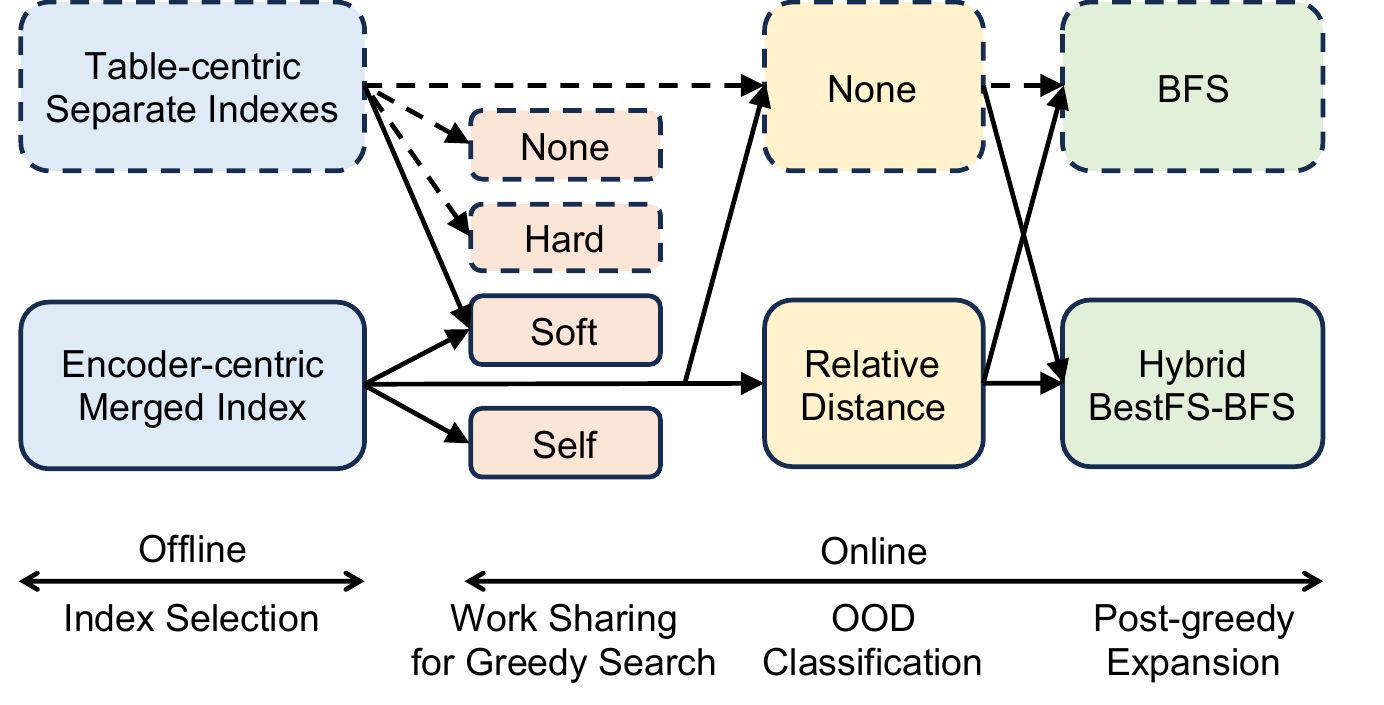}
\vspace*{-0.6cm}
\caption{\revisionA{Novelty}{Design space of approx. vector join. Dashed lines are explored in existing work, whereas solid lines are our contributions. 
OOD classification is tied to merged index.}}\label{fig:positioning}
\end{figure}

\begin{figure}[h!]
\centering
\includegraphics[width=0.8\columnwidth]{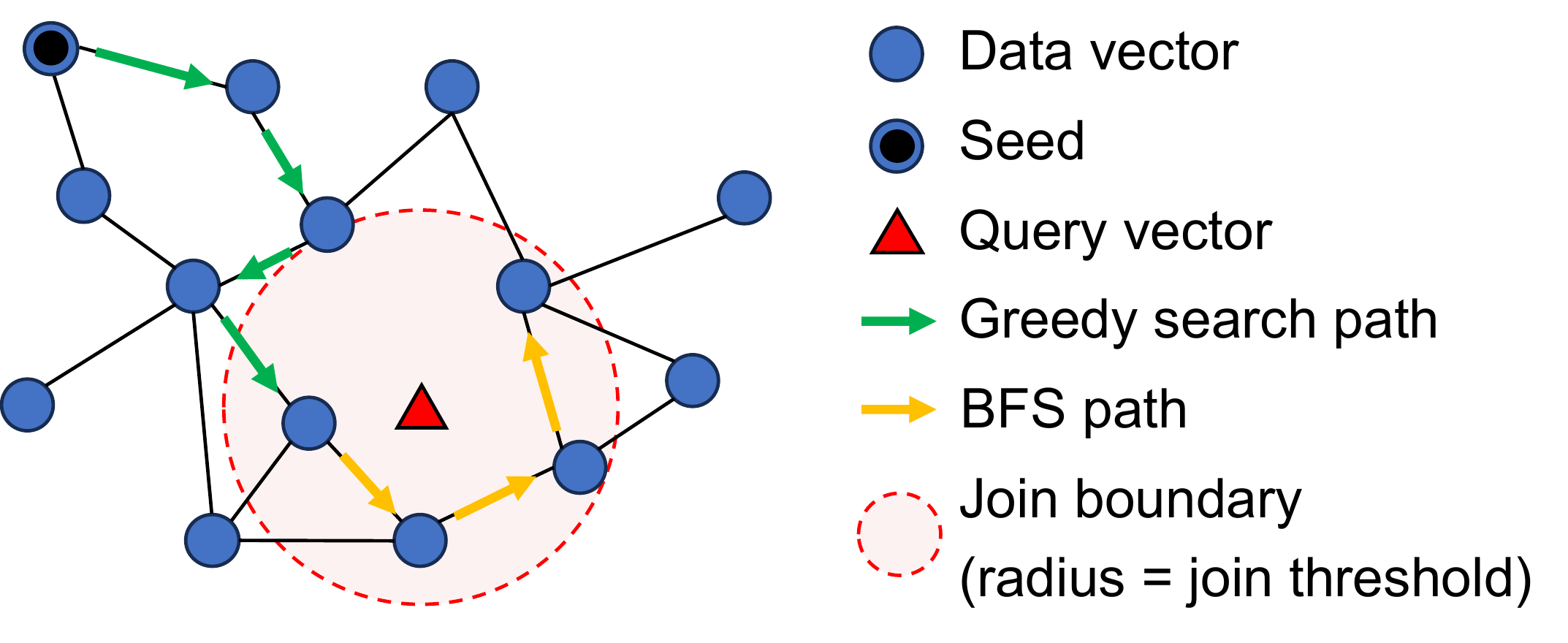}
\vspace*{-0.4cm}
\caption{\revisionA{}{Search procedure for a single query: starts from a seed set, performs greedy search to reach an initial in-range point, and then performs post-greedy expansion (BFS here) to collect remaining in-range points.}}\label{fig:search_phases}
\end{figure}

\subsubsection{Ordering and seed selection}


The quality of the seed is critical. The closer the starting node is to the query, the fewer search steps are typically needed to reach an in-range region.
In \textsc{OrderQueries} (Algorithm \ref{alg:order_queries}), when work sharing is enabled in hard or soft mode, queries are processed in an MST-based order so that similar queries are adjacent; otherwise they may be processed in arbitrary order. 
The \textsc{SeedPhase} (Algorithm \ref{alg:seed_phase}) then constructs the seed set. The default seed is the global entry point of the relevant index.
Depending on \texttt{cfg.work\_sharing}, this seed set may be augmented with cached seeds from the parent query (hard or soft mode) or with the query point itself (self mode) for merged index. This abstraction lets separate-index and merged-index execution share the same outer pipeline while differing only in how seeds are chosen.

\begin{algorithm}[htb]
\caption{\revisionA{}{\textsc{OrderQueries}}}\label{alg:order_queries}
\small{
\KwIn{$G$, \texttt{cfg}}
\KwOut{Order of queries}
    \If{\texttt{cfg.work\_sharing} $\in \{ \mathrm{hard, soft} \}$}{
        \Return the MST-based order from $G_X$ or $G_{X \cup Y}$ according to \texttt{cfg.merged}
    }
    \Return an arbitrary order
}
\end{algorithm}
\vspace*{-0.2cm}

\begin{algorithm}[htb]
\caption{\revisionA{}{\textsc{SeedPhase}}}\label{alg:seed_phase}
\small{
\KwIn{Query $q$, cache $C$, $G$, $\texttt{cfg}$}
\KwOut{Seed set $S$}
    $S \leftarrow \{s_{X \cup Y}\}$ \textbf{if} \texttt{cfg.merged} \textbf{else} $\{s_Y\}$ \redcomment{from $G_{X \cup Y}$ or $G_Y$}
    \Switch{\texttt{cfg.work\_sharing}}{
      \Case{hard \textbf{or} soft}{ 
        $S \leftarrow S \cup \mathcal{C}[q.parent]$ \redcomment{hard/soft work sharing}
      }
      \Case{self}{ 
        $S \leftarrow S \cup \{q\}$ \redcomment{merged index}
      }
    }
    \Return $S$
}
\end{algorithm}

\subsubsection{Greedy search}


As shown in Figure~\ref{fig:search_phases}, the search begins with a greedy search (Algorithm \ref{alg:greedy_phase}), implemented as a best-first traversal over a min-priority queue ordered by distance to the query. Starting from the seeds, the algorithm explores progressively closer nodes until one of three stopping conditions is met (Algorithm \ref{alg:stop_greedy}): (1) an in-range data point is found, (2) the best queue distance has not improved for a fixed number of iterations, or (3) in the merged-index case, the search has already probed the neighborhood of the query node that is likely to have the closest data point.
Here, (2) is a simple early-stopping rule that avoids wasteful traversals when no in-range point is likely to be found, which is especially important at small thresholds. \sigmod{We detail (3) in Section \ref{subsec:ours:mi}.}
The greedy search returns both the in-range points found so far and the set of visited nodes to share with post-greedy expansion, so no duplicate distance computations occur.

\moved{\textbf{Answer to: Why does a larger queue size not improve recall?}} This greedy search resembles standard top-$k$ search \cite{HNSW, NSG}, but its goal is different: it only needs to find one in-range point rather than rank the $k$ closest points. In top-$k$ search, the search ends if the queue is stabilized, thus larger queues (at least size of $k$) can improve accuracy at the cost of more traversal. However, the queue size here matters much less, as the task is closer to top-1 search.


\begin{algorithm}[htb]
\caption{\revisionA{}{\textsc{GreedySearchPhase}}}\label{alg:greedy_phase}
\small{
\KwIn{Query $q$, seeds $S$, $Y$, $G$, $\theta$, \texttt{cfg}}
\KwOut{Discovered in-range points $I$, visited points $V$}
    $I \leftarrow \{u | u \in S, dist(u, q) < \theta\}$ \redcomment{in-range data points discovered}
    $V \leftarrow S$ \redcomment{visited points}
    Initialize $Q$ from $S$ \redcomment{min-priority queue by ${dist}(\cdot, q)$}
    \While{$Q \neq \emptyset$ \textbf{and not} $\textsc{StopGreedy}(q, \mathrm{front}(Q), Y, \theta, \texttt{cfg})$}{
      $u \leftarrow \mathrm{popFront}(Q)$ \\
      \ForEach{neighbor $v$ of $u$ with $v \notin V$}{
          $V \leftarrow V \cup \{v\}$; $\delta \leftarrow \mathrm{dist}(v, q)$; insert $v$ into $Q$ with key $\delta$ \\
          \If{$v \in Y$ \textbf{and} $\delta < \theta$}{
              $I \leftarrow I \cup \{v\}$ \\
          }
      }
      trim $Q$ to at most \texttt{cfg.greedy\_beam\_size} smallest keys \\
    }
    \Return $(I, V)$ \\
}
\end{algorithm}

\begin{algorithm}[htb]
\caption{\revisionA{}{\textsc{StopGreedy}}}\label{alg:stop_greedy}
\small{
\KwIn{Query $q$, front element $(d,u)$ of greedy queue $Q$, $Y$, $\theta$, \texttt{cfg}}
\KwOut{Whether to stop greedy walk}
    \textbf{if} $(d < \theta)$ \textbf{and} $(u \in Y)$ \textbf{then return true} \\
    \textbf{if} $d$ has not been decreased for \texttt{cfg.conv\_threshold} iterations in \textsc{GreedySearchPhase} \textbf{then return true} \\
    \textbf{if} \texttt{cfg.merged} \textbf{and} $u \neq q$ \textbf{then return true} \\
    \Return \textbf{false}
}
\end{algorithm}

\subsubsection{Post-greedy expansion}
\nata{The post-greedy expansion (Algorithm \ref{alg:expansion_phase}) collects remaining in-range points. By default it runs BFS over in-range nodes; when hybrid search is enabled, out-range nodes are also queued and the queue is ordered by distance rather than FIFO, so in-range regions are exhausted before out-range bridges are crossed. \textsc{UpdateCache} is explained in Section \ref{subsec:ours:sws}.}


\begin{algorithm}[htb]
\caption{\revisionA{}{\textsc{CheckHybrid}}}\label{alg:check_hybrid}
\small{
\KwIn{$q$, $G$, \texttt{cfg}}
\KwOut{Whether to use hybrid search}
  \uIf{\texttt{cfg.merged} \textbf{and} \texttt{cfg.adapt\_hybrid}}{
      \Return (whether query $q$ is predicted OOD)
  }
  \Else{
    \Return \texttt{cfg.force\_hybrid}
  }
}
\end{algorithm}

\begin{algorithm}[htb]
\caption{\revisionA{}{\textsc{ExpansionPhase}}}\label{alg:expansion_phase}
\small{
\KwIn{Query $q$, hybrid search enabled $hybrid$, initial in-range points $I$, visited $V$, $Y$, $G$, $\theta$, \texttt{cfg}}
\KwOut{Final in-range points $I$}
    Initialize $Q$ from $I$ \redcomment{min-priority queue by ${dist}(\cdot, q)$ if \texttt{cfg.merged}, otherwise FIFO}
    \While{$Q \neq \emptyset$}{
      $u \leftarrow \mathrm{popFront}(Q)$ \\
      \ForEach{neighbor $v$ of $u$ with $v \notin V$}{

          \If{\texttt{cfg.merged} \textbf{and} $v \notin Y$}{
              \textbf{continue} \\
          }
          $V \leftarrow V \cup \{v\}$; $\delta \leftarrow {dist}(v, q)$ \\
          \If{$\delta < \theta$}{
              insert $v$ into $Q$; $I \leftarrow I \cup \{v\}$ \\
          }
          \Else{
              \uIf{$hybrid$}{
                  insert $v$ into $Q$; trim $Q$ to at most max(\texttt{cfg.hybrid\_beam\_size}, $|Q| - 1$) keys \redcomment{don't discard in-range points}
              }
          }
      }
    }
    \Return $I$ \\
}
\end{algorithm}

\vspace*{-0.2cm}
\subsection{Soft Work Sharing (Online)}\label{subsec:ours:sws}

\sigmod{We describe our online reuse first as it is useful independently of the index organization.}
In Section~\ref{subsec:ours:framework}, the online reuse corresponds to using \texttt{cfg.work\_sharing = hard} or \texttt{soft}. The only algorithmic difference between the two is the cache-update policy,  \textsc{UpdateCache} (Algorithm \ref{alg:update_cache}).

\begin{algorithm}[htb]
\caption{\revisionA{}{\textsc{UpdateCache}}}\label{alg:update_cache}
\small{
\KwIn{Query $q$, cache $C$, in-range points $I$, visited $V$, \texttt{cfg}}
    \Switch{\texttt{cfg.work\_sharing}}{
      \Case{hard}{
        $\mathcal{C}[q] \leftarrow I$ \\
      }
      \Case{soft}{
        $\mathcal{C}[q] \leftarrow $ top-1 closest data point in $V$ to $q$ \\
      }
    }
}
\end{algorithm}

Hard work sharing (HWS), used by prior work \cite{SimJoin}, caches all in-range points found for the current query and reuses them as seeds for its children in the MST. This is effective when nearby queries already share many answers, but it is restrictive in two ways. First, when the threshold is small, the search may find no in-range point at all, so the cache is empty and all traversal effort is lost. Second, when the threshold is large, caching all in-range points is redundant, because the next query only needs a good seed to reach an in-range region, not the full previous answer set. 

Our soft work sharing (SWS) addresses these by storing only the top-1 closest visited point, even if that point is out of range for the current query. In terms of the framework, this means setting \texttt{cfg.work\_sharing = soft} and updating \textsc{UpdateCache} to store the best point in \texttt{visited}, rather than the set of in-range points. The key intuition is that search produces reusable information beyond final join answers: even without finding an in-range point, the closest discovered point may still be much closer to a similar future query than the global entry node of the index. Thus, SWS reuses \emph{useful search effort} rather than only \emph{search results}, while also keeping the cache compact. 
Importantly, this benefit does not depend on merged indexing; SWS is already useful under separate-index executions because it directly reduces redundancy across similar queries, mainly reducing greedy-search steps and improving efficiency without hurting recall compared to HWS.

\vspace*{-0.2cm}
\subsection{Merged Index (Offline)}\label{subsec:ours:mi}

We next move beyond improving reuse under separate indexes and reduce the online effort more fundamentally by changing the index organization itself. In the framework, this corresponds to setting \texttt{cfg.merged = true}, using \texttt{cfg.work\_sharing = self} by default. 

\revisionA{R2.D2}{The key observation is that vectors to be joined are typically produced by the same encoder \cite{pansurvey} and already inhabit a shared embedding space. 
\nata{Building separate indexes forces the online phase to rediscover cross-set proximity from scratch for every join — a cost that can be offloaded to index construction.}
A more natural design — one that is encoder-centric — directly organizes vectors from both sets in one unified graph, encoding cross-set geometry into the index structure itself.}

\sigmod{The question is then how to exploit this unified structure to accelerate the initial online in-range region discovery. A naive approach would be to precompute in-range regions for each query during index construction. However, this faces two fundamental obstacles: the threshold $\theta$ is not known at construction time, and even if it were fixed, a separate structure per threshold would be needed, incurring prohibitive overhead. We therefore seek a threshold-agnostic solution.}

\sigmod{The key insight is that such a solution is already implicit in standard graph indexes, through the \emph{relative neighborhood graph} (RNG) property \cite{RNG, NSG}. 
We \emph{revisit} this property in the context of threshold-based vector join, showing that it is sufficient for threshold-agnostic in-range region discovery without building a dedicated index.
An edge $(u, v)$ exists in an RNG when no other point is simultaneously closer to both $u$ and $v$, preserving local proximity while avoiding redundant edges. A direct consequence proved formally in Appendix C is that the top-1 closest node to any query lies in the query's local neighborhood. Therefore, once query and data vectors are organized in a single unified graph, a query node's neighborhood highly likely exposes its nearest data-side neighbor without any threshold-specific precomputation, enabling constant-time in-range region discovery}: $O(md)$ for $m$ neighbors and vector dimension $d$. This termination is captured by \textsc{StopGreedy} (Line 3). 
The merged index (MI) implements this idea for any off-the-shelf graph index by building $G_{X \cup Y}$ instead of $G_X$ and $G_Y$ (Figure \ref{fig:separate_merged_index}). In this sense, the MI is analogous to a hash-join-like design point for vectors (constant-time lookup) while the build phase is offline as in index nested-loop join (Table \ref{tab:rdbms_analogy}). \nata{If multiple vector sets are produced by the same encoder, they can be indexed jointly in a single structure; in this paper, we focus on the common case of two sets, $X$ and $Y$.}



\begin{figure}[ht]
\vspace*{-0.2cm}
    \centering
    \subfigure[Separate indexes.]{%
        \includegraphics[width=0.33\columnwidth]{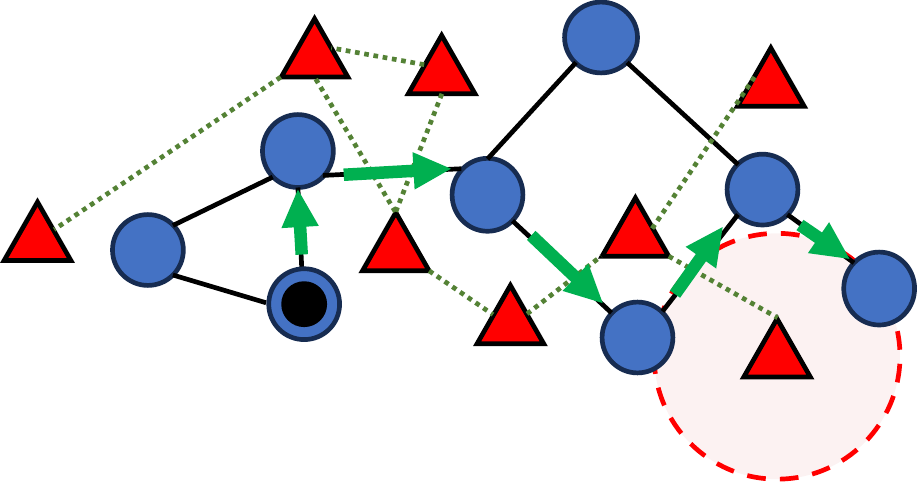}
        \label{fig:separate_index}
    }
    \hfill
    \subfigure[Merged index.]{%
        \includegraphics[width=0.33\columnwidth]{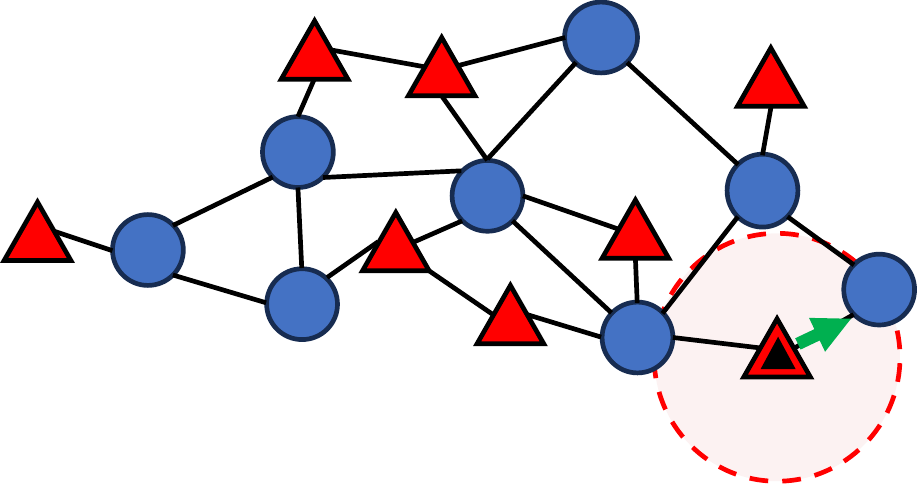}
        \label{fig:merged_index}
    }
    \vspace*{-0.4cm}
    \caption{Search procedure for two types of indexing. Same legend with Figure \ref{fig:search_phases}. Bottom-right query is used. 
    }
    \label{fig:separate_merged_index}
\vspace*{-0.2cm}
\end{figure}

\revisionA{R3.W3}{A caveat is that a query may have no neighboring data point. However, such cases are rare in practice: even on OOD datasets such as \ImageNet{} (Section \ref{sec:exp}), where query and data distributions barely overlap, only \sigmod{up to} 1\% of queries exhibit this behavior, mostly with empty join results due to the large distance to data. Hence, we leave adding more specialized techniques \sigmod{(e.g., forcing cross-set neighbors)} on top of MI as future work.}

The offline overheads of MI are at most modest. Since graph-based indexes store a bounded number of outgoing neighbors per node, the MI preserves the same asymptotic linear space complexity as maintaining separate indexes on $X$ and $Y$, and its construction time remains in the same asymptotic order. Thus, the MI is not a heavier asymptotic object; it reorganizes essentially the same indexing budget into a unified structure that enables offline work offloading. \revisionA{R1.D5}{We prove this theoretically in Appendix B due to space limitation, and show in Section \ref{sec:exp} that MI adds negligible overheads (< 5\%) in practice.}
Because the MI is built over the shared embedding space rather than for one specific threshold, it can be reused for arbitrary future threshold-based joins over the same shared-encoder vector sets.
Finally, the maintenance is also simple as we can use existing index update mechanisms \cite{HNSW, singh2021freshdiskann}.

\moved{\textbf{Answer to: How is the index maintained?} \textit{Discussion on index maintenance.} Since the MI is index-agnostic, existing index maintenance mechanisms apply directly \cite{HNSW, singh2021freshdiskann}, and maintenance optimization is largely orthogonal to this paper. A node update requires no query/data distinction---only its set label, which is consulted when the node is used in joins or searches. Incremental insertions handle streaming arrivals (e.g., seller batches in Appendix A), while periodic batch rebuilds can restore graph quality under heavy churn, exactly as for ordinary per-set indexes. The build cost itself is the standard graph-ANN construction cost one already pays for search, not a join-specific surcharge: building the MI costs essentially the same as the two per-set indexes it replaces (up to $+4.5\%$ size and $+0.7\%$ time, Section \ref{subsec:exp:offline_overhead}).}

\textit{Discussion on filtered queries.} The MI also resolves a structural limitation of work sharing over separate indexes; work sharing depends on the connectivity of the query-side graph and its induced MST. When the query set is filtered, the induced query subgraph can be highly disconnected, and no work can be shared across disconnected components. The MI avoids this dependence on query-side connectivity altogether, because the execution is anchored in the shared query-data geometry rather than only in the query-side structure.

\vspace*{-0.2cm}
\subsection{OOD-Aware Hybrid Search (Online)}\label{subsec:ours:ood_hybrid}



The previous components mainly target efficiency, but an important recall failure mode remains. This is caused by a strong \emph{locality assumption} shared by existing graph indexes and their search procedures. Once search reaches an in-range region, the remaining matches are expected to be reachable by continuing to expand locally through other in-range points. This assumption usually holds for easy in-distribution (ID) queries (e.g., \SIFT in Figure \ref{fig:tsne}), which is why BFS-style post-greedy expansion works well in prior work \cite{SimJoin}. However, it can fail for difficult OOD queries, where valid matches may lie in multiple disconnected in-range regions separated by out-range points (\LAION in Figure \ref{fig:tsne}). In such cases, BFS becomes trapped inside the first discovered in-range region and misses others.
Recovering the missing matches therefore requires selectively traversing out-range points as bridges.

\begin{figure}[htbp]
\vspace*{-0.2cm}
  \centering
  \begin{minipage}{0.49\columnwidth}
    \centering
    \includegraphics[width=\linewidth]{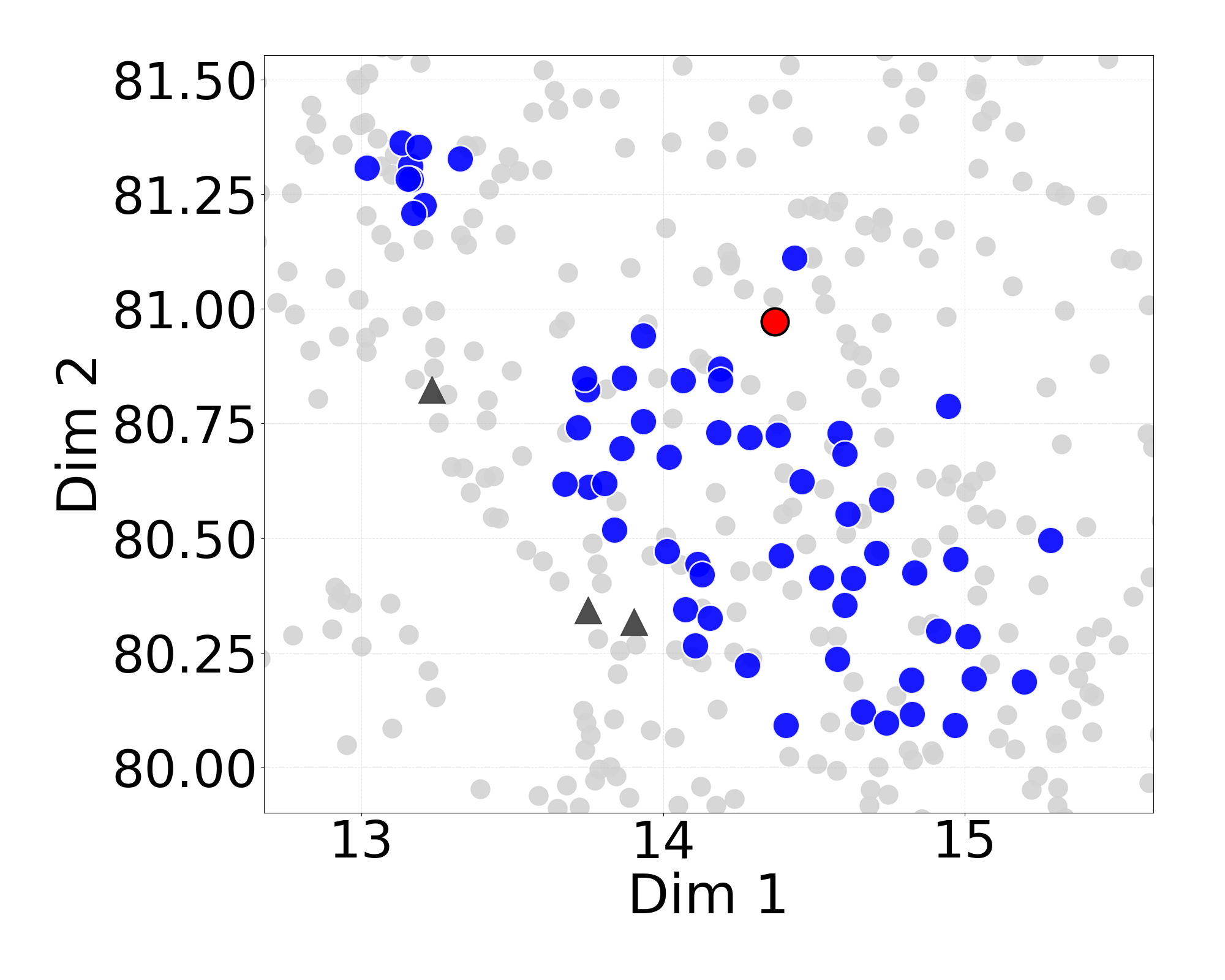}
  \end{minipage}
  \hfill 
  \begin{minipage}{0.49\columnwidth}
    \centering
    \includegraphics[width=\linewidth]{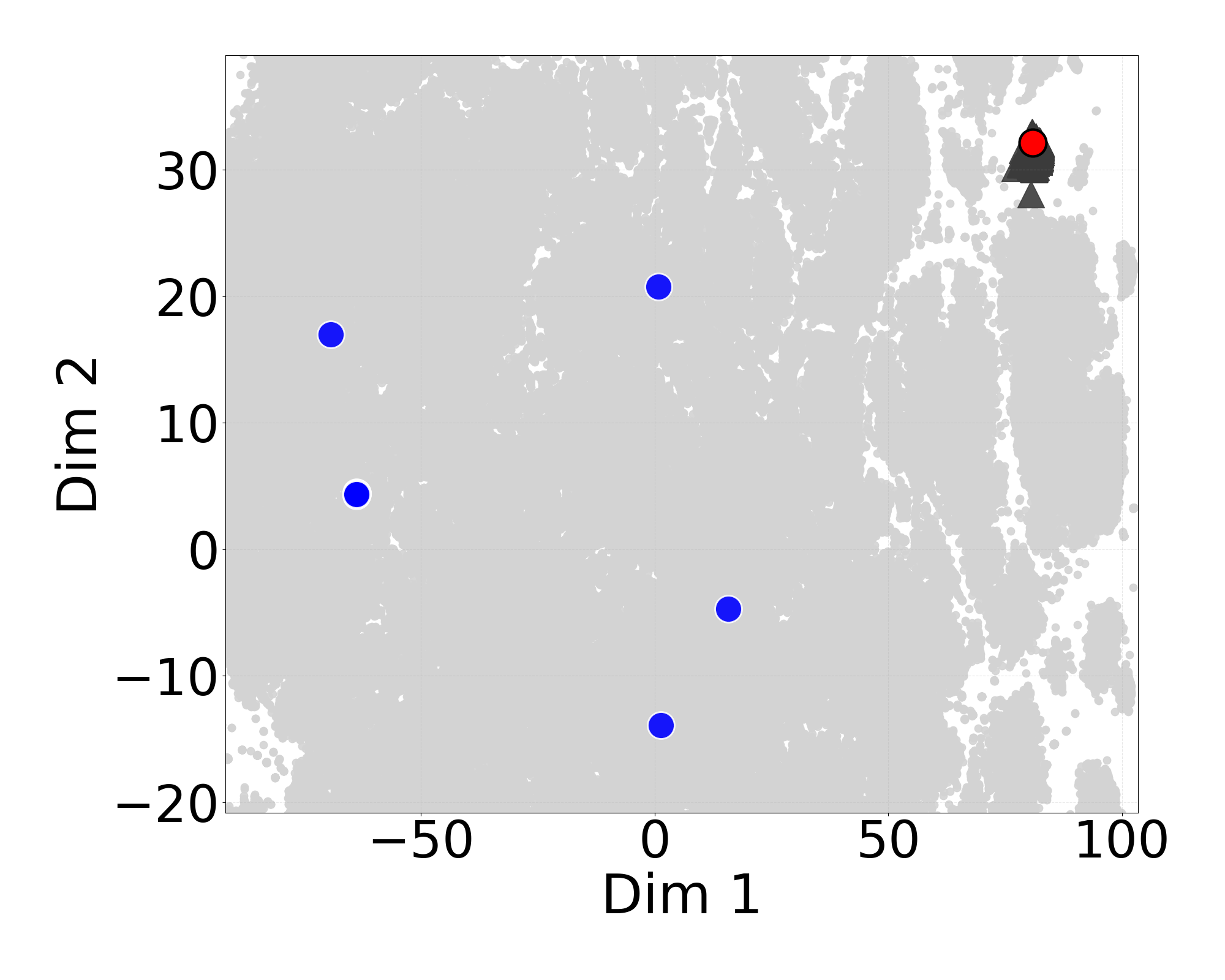}
  \end{minipage}
  \vspace*{-0.4cm}
  \caption{T-SNE visualization of \SIFT (left) and \LAION (right) datasets, highlighting an example query (red) and its in-range data points (blue).}
  \label{fig:tsne}
\vspace*{-0.4cm}
\end{figure}

\moved{\textbf{Answer to: Why not use the standard beam search of top-$k$ ANNS with a large queue?}} A natural alternative is best-first search (BestFS) \cite{NSG}, which always expands the closest point as in greedy search. 
However, BestFS is designed for top-$k$ retrieval rather than threshold search. It is effective for reaching close points, but it does not exploit the fact that, once an in-range region has been reached, all connected in-range points should be enumerated. Furthermore, as mentioned in Section \ref{sec:vector_joins}, BestFS requires $k$ as an input which is hard to predict for a given query and threshold \cite{lan2024cardinality}; Figure \ref{fig:sift_cdf_plot} shows that $k$ varies significantly across queries and thresholds even for the same dataset.
By contrast, BFS is well aligned with threshold search, but cannot traverse out-range points.
Our \emph{hybrid search} combines the strengths of both. Like BFS, it exhaustively expands currently discovered in-range points. Like BestFS, it also uses a bounded priority queue to keep out-range points that allows best-first exploration through them when necessary (Algorithm \ref{alg:expansion_phase}), \nata{but still prioritizing in-range regions.}
In this way, hybrid search preserves the completeness of BFS within each in-range region while borrowing the global reachability of BestFS across disconnected regions.

\begin{figure}[h!]
\vspace*{-0.2cm}
\centering
\includegraphics[width=0.54\columnwidth]{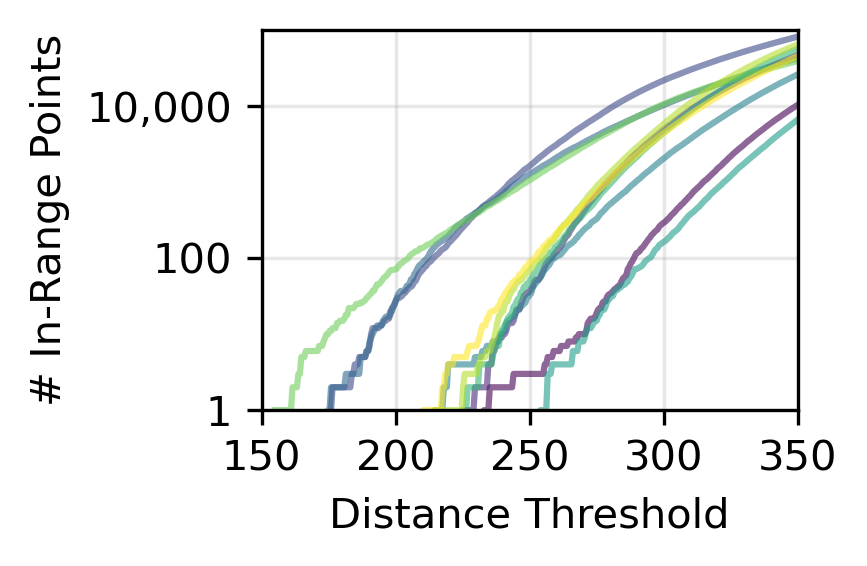}
\vspace*{-0.5cm}
\caption{Cumulative distributions of number of in-range data points, one line per sampled query, on \SIFT dataset.}\label{fig:sift_cdf_plot}
\vspace*{-0.2cm}
\end{figure}

The key remaining question is when to enable this extra exploration. Doing so would only add overhead for ID queries since the BFS is already sufficient. We therefore \emph{adaptively} enable hybrid search only for OOD queries. \revisionA{R1,2,3}{The merged index makes this strategy practical by providing a lightweight OOD prediction from its query--data overlap signal in the local neighborhoods (why OOD-aware hybrid search is tied to merged index in Figure~\ref{fig:positioning}).}
We measure 1) the average distance from a query to its neighboring data points ($d_1$ in Figure \ref{fig:OOD_combined}) {over} 2) the average distance from such neighboring data points to their neighbors that are 2-hop away from the query ($d_2$). If $\frac{d_1}{d_2} > 1.5$, we consider that this query lies far from data and is OOD. 
\revisionA{R1.W2}{
This metric is dimensionless and encoder-agnostic, and 1.5 is a robust threshold across the datasets evaluated (Figure \ref{fig:OOD_combined}).
}

\begin{figure}[h!]
\vspace*{-0.2cm}
\centering
\begin{minipage}[c]{0.33\columnwidth}
    \centering
    \includegraphics[width=\linewidth]{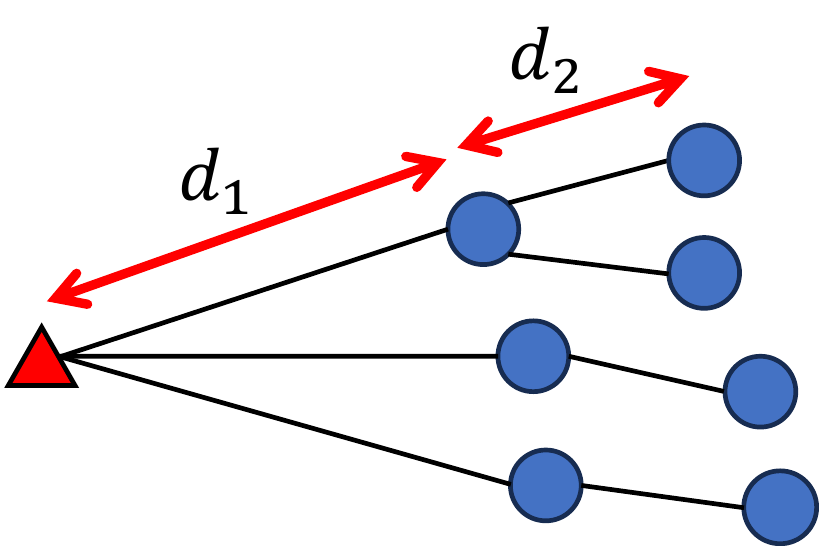}
\end{minipage}
\hfill
\begin{minipage}[c]{0.42\columnwidth}
    \centering
    \includegraphics[width=\linewidth]{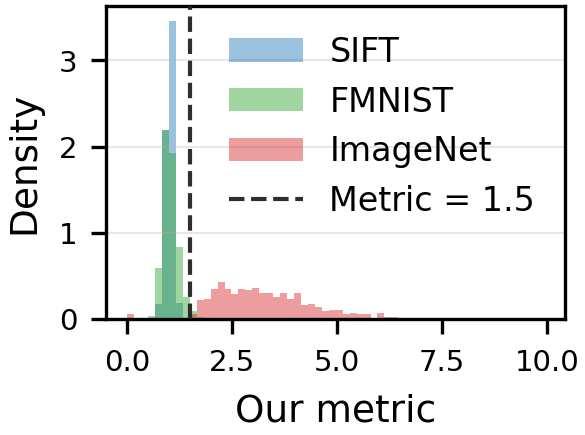}
\end{minipage}
\vspace*{-0.4cm}
\caption{(Left) Features used to predict whether a query is OOD. (Right) \revisionA{R1.W2}{Metric distributions for ID (\SIFT, \FMNIST) and OOD (\ImageNet) datasets.}}
\label{fig:OOD_combined}
\vspace*{-0.3cm}
\end{figure}

In the framework, this component corresponds to enabling either \texttt{cfg.force\_hybrid} or \texttt{cfg.adapt\_hybrid}. The latter is the default in our adaptive setting. \textsc{CheckHybrid} activates hybrid search only when the query is predicted OOD; otherwise, the framework uses the default BFS expansion.

\moved{\textbf{Answer to: Why do \MI and \MIA look similar in experiments?} Two behavioral consequences follow directly from this adaptivity. On in-distribution datasets, \MIA\ falls back to BFS for nearly all queries and therefore matches \MI's latency and recall; the two methods are \emph{expected} to coincide there. On OOD datasets (e.g., \ImageNet, \COCO, \LAION), \MIA\ deliberately spends extra out-range traversal to bridge disconnected in-range regions, trading a bounded amount of latency for the recall that all BFS-based methods lose (Section \ref{sec:exp}).} 


\paragraph{Remark.}
The three components address different failure modes and therefore complement one another. Soft work sharing reduces redundant traversal across similar queries and is beneficial even under separate indexes. The merged index improves the offline organization of the search space and largely removes the online effort of greedy search. OOD-aware hybrid search then fills the remaining recall gap for difficult queries whose valid matches are disconnected. In this sense, work sharing and hybrid exploration are complementary: work sharing is most beneficial for easier, more in-distribution queries, whereas hybrid exploration is needed mainly for harder OOD queries. 
\sigmod{Furthermore, when query and data distributions heavily overlap, combining soft work sharing on top of merged index can further improve recall, as the closest reused seed often falls within the join threshold for similar queries (Section \ref{subsec:exp:equi_size_join}).}

\section{Experiments}\label{sec:exp}

This section evaluates the performance of vector join methods implemented on our framework and answers the following questions. 


\begin{itemize}[topsep=0pt, leftmargin=0.5cm]
    \item Q1: Which \revisionA{}{join method} shows the best performance in terms of vector join execution time and recall?
    \item Q2: How do index parameters affect performance? 
    \item \revisionA{}{Q3: What is the main bottleneck?}
    \item Q4: How large is the overhead in offline index construction? 
    \item Q5: How well does a method scale to large data? 
    \item Q6: How does the proximity graph affect performance? 
    \item \moved{Q7: How does our method compare against a dedicated OOD index, and what is the trade-off for single-set searches?}
\end{itemize}

\subsection{Experimental Setup}\label{subsec:exp:setup}

\subsubsection{Datasets}

We evaluate our methods on eight widely used vector datasets (Table \ref{tab:datasets}) that cover diverse modalities, dimensionalities, and data distributions, including vision, text, and multimodal representations. Datasets are taken from the ANN-Benchmarks \cite{ann-benchmarks} and VIBE \cite{VIBE} suites, which provide standardized query-data splits and evaluation protocols for ANN search.


\begin{table}[htbp]
\vspace*{-0.2cm}
  \centering
  \caption{Statistics of datasets used (number of query vectors $|X|$, number of data vectors $|Y|$, dimension, mode of the degree distribution of data index $G_{Y}$, and the ratio of queries that are predicted as OOD by using our \revisionA{}{metric}). The last three rows are \revisionA{}{OOD datasets}.}
  \label{tab:datasets}
  \vspace*{-0.2cm}
  \scalebox{0.85}{
  \begin{tabular}{lccccc}
    \toprule
    \textbf{Dataset} & \textbf{$|X|$} & \textbf{$|Y|$} & \textbf{Dimension} & \textbf{Mode} & \textbf{OOD-Ratio} \\
    \midrule
    \SIFT     & 10,000    & 1,000,000 & 128 & 28 & 0.00\% \\
    \GIST     & 1,000     & 1,000,000 & 960 & 8  & 1.10\% \\
    \Glove    & 10,000    & 1,183,514 & 200 & 70 & 0.00\% \\
    \NYTimes  & 10,000    & 290,000   & 256 & 6  & 3.46\% \\
    \FMNIST   & 10,000    & 60,000    & 784 & 13 & 3.03\% \\
    \COCO     & 1,000     & 282,360   & 768 & 15 & 97.3\% \\
    \ImageNet & 1,000     & 1,281,167 & 640 & 19 & 97.4\% \\
    \LAION    & 1,000     & 1,000,448 & 512 & 22 & 95.1\% \\
    \bottomrule
  \end{tabular}
  }
\vspace*{-0.2cm}
\end{table}

\SIFT and \GIST consist of image descriptors commonly used for evaluating large-scale vector search systems.
\Glove contains word embeddings trained on large text corpora, capturing semantic similarity in natural language.
\NYTimes represents document embeddings derived from news articles, reflecting real-world text retrieval workloads.
\FMNIST consists of Fashion-MNIST image embeddings, representing dense visual feature vectors.
\COCO uses multimodal image-text embeddings from the MS-COCO dataset, capturing cross-modal semantic similarity.
\ImageNet contains deep visual embeddings extracted from the ImageNet dataset, representing large-scale image retrieval scenarios.
\LAION contains CLIP-based multimodal embeddings from the LAION dataset, reflecting modern web-scale vision-language representations. 
\kkim{Among these, \COCO, \ImageNet, and \LAION have mostly OOD queries \cite{VIBE}.
}

These datasets collectively span low to high dimensional spaces, small to million-scale data sizes, and both unimodal and multimodal embeddings, enabling a comprehensive evaluation of vector join behavior across different distributions.
For each dataset, we use the standard query-data splits provided by the benchmarks \revisionA{}{by default.}
In-distribution queries are spread across the data manifold rather than being clustered in a single region.

\moved{\textbf{Answer to: How were the distance thresholds designed?}} To evaluate threshold-based joins under varying selectivities, we use seven evenly spaced distance thresholds per dataset. The thresholds are chosen to cover a wide spectrum of join result sizes, from sparse joins with very few matches to dense joins where a large fraction of data vectors fall within the threshold. Table \ref{tab:epsilon_mapping} reports the threshold values used for each dataset. Figure \ref{fig:dataset_join_size} shows the 
join sizes, illustrating the high variance in join sizes across both datasets and thresholds.
\kkim{Note that large thresholds with large join sizes are less likely to be used in practice, similar to using small $k$ values (e.g., 5, 10, or 100) in typical top-$k$ ANN searches \cite{ootomo2024cagra, HNSW}.
Therefore, we mainly focus on small thresholds (e.g., $\theta_1$ to $\theta_3$) and show the results for larger thresholds for completeness.}

When the threshold is large, it is often enough to estimate the join size and approximate the results as in approximate query processing \cite{hellerstein1997online, agarwal2013blinkdb} or LIMIT the number of outputs as in exploratory data analytics \cite{idreos2015overview, zimmerer2025pruning}, and full sequential scans or nested loop joins may be more efficient than index scans or index nested loop joins, as the selectivity increases in the relational context \sigmod{(see Appendix E for further discussion)}. 





\begin{table}[ht]
\vspace*{-0.2cm}
\centering
\caption{Evenly-spaced distance thresholds ($\theta$) per dataset. 
}
\label{tab:epsilon_mapping}
\vspace*{-0.2cm}
\scalebox{0.85}{
\begin{tabular}{lccccccc}
\toprule
\textbf{Dataset} & \textbf{$\theta_1$} & \textbf{$\theta_2$} & \textbf{$\theta_3$} & \textbf{$\theta_4$} & \textbf{$\theta_5$} & \textbf{$\theta_6$} & \textbf{$\theta_7$} \\
\midrule
\SIFT      & 50  & 100 & 150  & 200  & 250  & 300  & 350  \\
\GIST      & 0.3   & 0.5   & 0.7    & 0.9    & 1.1    & 1.3    & 1.5    \\
\Glove     & 0.6   & 0.7   & 0.8    & 0.9    & 1.0    & 1.1    & 1.2    \\
\NYTimes   & 0.1   & 0.3   & 0.5    & 0.7    & 0.9    & 1.1    & 1.3    \\
\FMNIST    & 500 & 750 & 1000 & 1250 & 1500 & 1750 & 2000 \\
\COCO      & 1.33 & 1.335 & 1.34 & 1.345 & 1.35 & 1.355 & 1.36 \\
\ImageNet  & 1.19 & 1.21 & 1.23 & 1.25 & 1.27 & 1.29 & 1.31 \\
\LAION     & 1.12 & 1.14 & 1.16 & 1.18 & 1.2 & 1.22 & 1.24 \\
\bottomrule
\end{tabular}
}
\end{table}
\vspace*{-0.6cm}

\begin{figure}[h!]
\centering
\includegraphics[width=0.55\columnwidth]{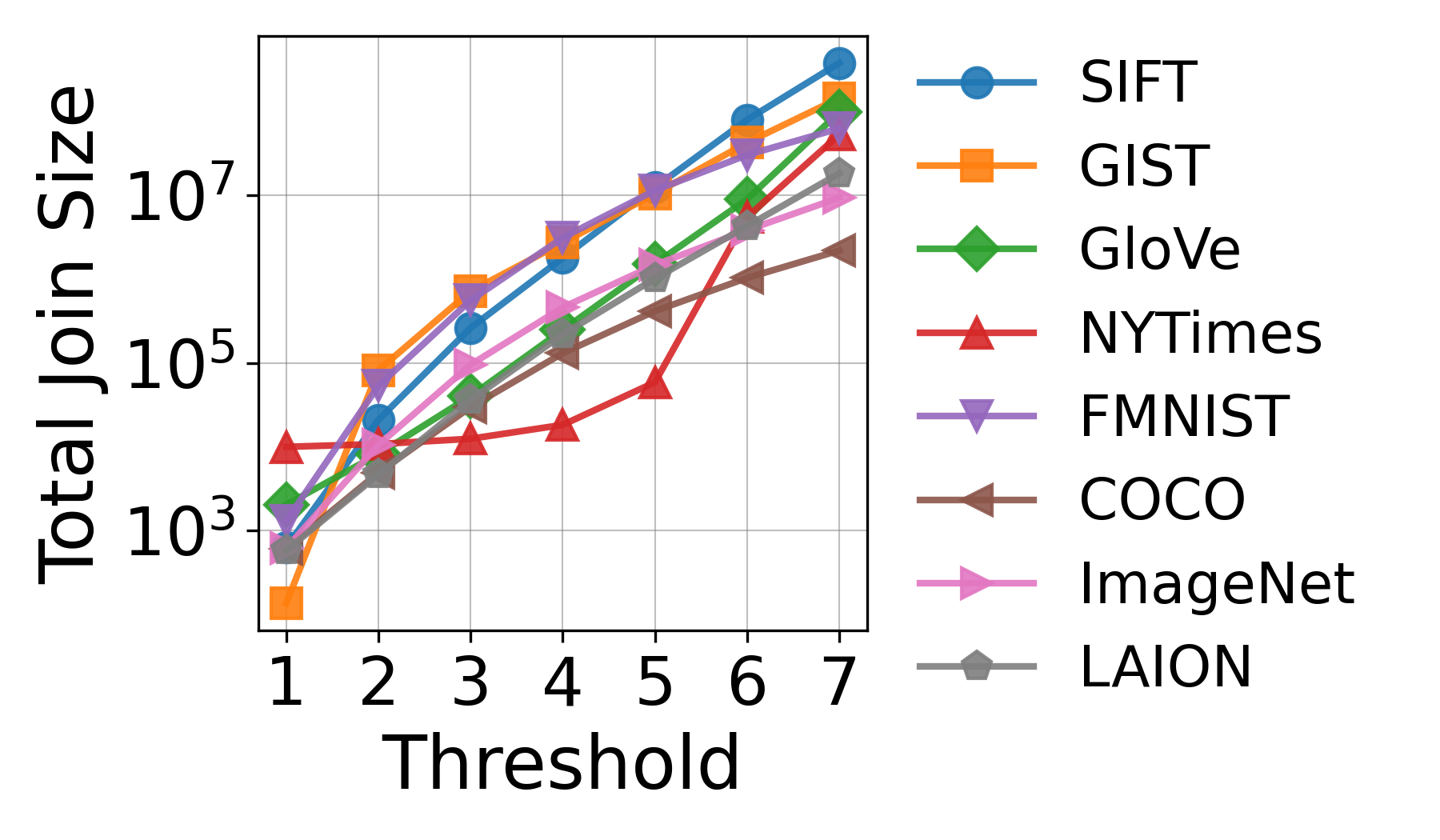}
\vspace*{-0.4cm}
\caption{Join size per dataset and threshold.}\label{fig:dataset_join_size}
\vspace*{-0.4cm}
\end{figure}


\subsubsection{Baselines}



We implemented all baselines within our unified vector join framework to ensure a fair comparison and isolate the effect of \revisionA{}{each component.} 
We also implemented the state-of-the-art threshold-based vector join method, \SimJoin\footnote{\revisionA{}{We chose \SimJoin as the main competitor that showed superior performance to \XJoin (Section \ref{sec:related}), and neither XJoin nor DiskJoin is open-sourced.}} \cite{SimJoin}, inside our framework and achieved similar efficiency with their paper. 
This allows us to fairly assess how our techniques improve over state-of-the-art work sharing.


\begin{itemize}[topsep=0pt, leftmargin=0.2cm]




\revisionA{}{

\item \NLJ: Nested-loop join that computes all pairwise distances between queries and data, which is exact but impractical.

\item \INLJ: Index nested-loop join, each query independently performs an approximate search over the data index (\texttt{cfg.merged = false, work\_sharing = none} in Section \ref{subsec:ours:framework}).


\item \HWS: Hard work sharing (\texttt{work\_sharing = hard}) on separate indexing, corresponding to our implementation of \SimJoin \cite{SimJoin}.

\item \SWS: Our soft work sharing (\texttt{work\_sharing = soft}) on separate indexing.

\item \MI: Our merged index (\texttt{merged = true, work\_sharing = self}).

\item \MIA: Merged index with our OOD-aware adaptive hybrid search (\texttt{adapt\_hybrid = true}).

}

\end{itemize}

We use the NSG graph index \cite{NSG} following \cite {SimJoin} due to its strong efficiency-recall trade-off and fast convergence. Index parameters (e.g., graph degree and max queue size) are kept consistent across all methods to ensure comparability. We use the default parameters provided in \cite{NSGCode} \kkim{with max neighborhood size of 70 and max queue size of 256.}
We do not extensively tune the underlying index search algorithms beyond standard parameter settings, such as applying vector quantizations or aggressive SIMD optimizations, as our goal is to evaluate join-level optimizations rather than single-query ANN performance.

\vspace*{-0.2cm}
\subsubsection{Metrics}


We measure the latency of join execution, recall, and memory usage for the online join processing. We measure indexing time and index size for the offline phase.






\vspace*{-0.2cm}
\subsubsection{System Configuration}

For all our experiments, we use a server with Intel(R) Xeon(R) Gold 5118 CPUs and 376 GB memory. We use a single thread for evaluating vector join.

\vspace*{-0.2cm}
\subsection{Vector Join Performance}\label{subsec:exp:base}


\subsubsection{Overall Results.} \label{overall_results}



Figure \ref{fig:main_experiment} shows online vector join performance. 
Overall, our proposed techniques -- Soft Work Sharing (\SWS), Merged Index (\MI), and Adaptive Hybrid Search (\MIA) -- consistently outperform the state-of-the-art baseline Hard Work Sharing (\HWS) in both efficiency and robustness of recall. 

\begin{figure*}[!t]
    \centering
    \includegraphics[width=0.9\textwidth]{exp_figure/combined_results_median_square_tight.png}
    \vspace{-10pt}
    \caption{Vector join performance (latency, recall, and memory usage) on seven thresholds and eight datasets.}\label{fig:main_experiment}
    \vspace*{-0.2cm}
\end{figure*}






\noindent \textbf{\underline{\NLJ.}} As expected, \NLJ shows consistently high latency across all thresholds and gives perfect recall.

\noindent \textbf{\underline{\INLJ.}} 
As expected, \INLJ is significantly faster than \NLJ at small thresholds, but may drop the recall noticeably for certain datasets. For \NYTimes and \GIST, the greedy search cannot find an in-range point due to weak data locality. The index is weakly connected (node degrees are much smaller than the other datasets, i.e., smaller Mode in Table \ref{tab:datasets}) and does not effectively navigate to existing in-range points.
This also occurs for OOD datasets. 






\noindent \textbf{\underline{\HWS and \SWS.}} 
\sigmod{While \HWS reduces latency by up to 1.6x over \INLJ,} \SWS achieves lower latency than \HWS across all datasets, especially for small thresholds ($\theta_1$-$\theta_3$), where \HWS often caches no point at all and thus fails to reuse traversal efforts. By caching the closest out-range point, \SWS significantly reduces the greedy search overhead for similar queries. This improvement is most visible on \SIFT, \GIST, \Glove, and \FMNIST, where \SWS consistently reduces execution time by up to \sigmod{3x} compared to \HWS, while maintaining similar recall.
For large thresholds, \SWS significantly reduces the memory usage of \HWS by caching the closest data node per query, not all in-range nodes redundantly.


\noindent \textbf{\underline{\MI.}} \MI significantly reduces greedy search cost by offloading the task of finding an initial in-range point to the offline index construction. As shown in Figure \ref{fig:main_experiment}, \MI achieves the lowest latency among all methods across most thresholds and datasets. This effect is particularly strong for small thresholds, where the greedy search in separate indexes often traverses many nodes before finding the first in-range point. With \MI, the search frequently starts directly from a close data neighbor, reducing traversal overheads. Consequently, MI approaches constant-time behavior for finding an in-range point, yielding substantial speedups over \HWS and \SWS \sigmod{by up to 44.1x and 26.9x} without sacrificing recall. \kkim{Furthermore, enabling direct search from close data points often increases recall by a large margin, such as for \GIST and \NYTimes. 
Still, it often shows low recall for OOD queries; while being higher than separate-index methods, but not yet sufficient without hybrid search.} 


\noindent \textbf{\underline{\MIA.}}
For OOD datasets (\COCO, \ImageNet, and \LAION), \MIA substantially improves recall by up to 43\%. While BFS-expansion can miss disconnected in-range regions for OOD queries, \MIA selectively activates the hybrid search strategy to bridge out-range walls. As a result, it consistently achieves higher recall than the others on these datasets with a trade-off of higher latency than \MI. Still, the latency is lower than using separate indexes (\SWS and \HWS). For non-OOD datasets, \MIA achieves similar latency with \MI by adaptively disabling the hybrid search and falling back to BFS (Table \ref{tab:datasets}). 
\nata{Together, these results validate the robustness claim from Section \ref{sec:intro}: the co-design maintains high recall across all eight datasets, including the structurally challenging OOD distributions.}

\revisionA{R3.W4}{

\noindent \textbf{\underline{Other methods.}}
We omitted other methods, e.g., \SWSH  (\texttt{force\_hybrid = true}), \MIH, and \MISWS \sigmod{whose behaviors are predictable from our analysis and showed smaller performance gaps compared to the others.} For \SWSH and \MIH, forcing hybrid search for all queries increases latency for all datasets while improving recall for only OOD datasets due to the lack of adaptivity.
For \MISWS, reusing the closest visited point per query does not lead to substantial latency reduction as in \INLJ vs. \SWS since \MI already significantly reduces greedy search steps.
Still, we evaluate \MISWS in Section \ref{subsec:exp:equi_size_join} and show that it can improve recall when \MI misses in-range regions.

}



\subsubsection{Latency-recall Trade-off.}

Figure \ref{fig:latency_recall_exp} presents the latency–recall trade-off when varying the max queue size \revisionA{}{(\texttt{greedy\_beam\_size} or \texttt{hybrid\_beam\_size} in Section \ref{subsec:ours:framework})} for the smallest threshold on all eight datasets. 
\revisionA{}{By default}, the queue size controls the greedy search, while for \MIA, it affects the hybrid search. 
\revisionA{}{As mentioned in Section \ref{subsec:ours:framework}, the greedy search is similar to top-1 search and thus is not affected by queue size much.}

While \MI and \MIA together form the best Pareto curves, \MIA shows a clear latency-recall trade-off by controlling the hybrid queue size. This improves recall for OOD-heavy datasets by allowing controlled traversal through out-range nodes.

\begin{figure*}[!t]
\centering
\includegraphics[width=\textwidth]{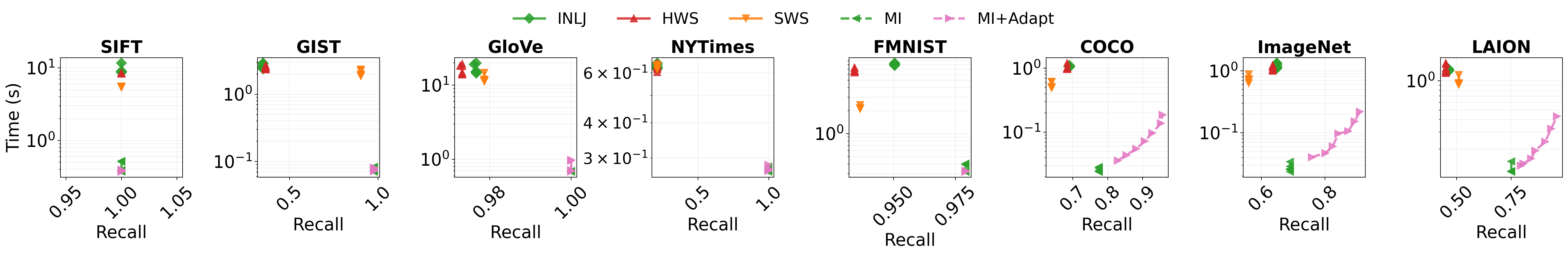}
\vspace*{-0.8cm}
\caption{Latency-recall curves for varying queue sizes from 8 to 512. Smallest threshold, $\theta_1$, is used. The queue size is defined for the greedy search phase for non-MI methods and hybrid search phase for \ESMIA. 
}\label{fig:latency_recall_exp}
\vspace*{-0.2cm}
\end{figure*}

\subsubsection{Latency Breakdown.}

Figure \ref{fig:profiling_exp} breaks down the total join latency into \kkim{three} parts: greedy search, BFS (or hybrid search), and the others. 
As expected, \INLJ, \HWS, and \SWS are bottlenecked by the greedy search for small thresholds, finding a single in-range point, and the main goal of \revisionA{}{work sharing and merged indexing} has been to reduce the traversal effort in the greedy search. \revisionA{}{\MI clearly resolves this overhead.}
BFS dominates runtime for large thresholds due to large join sizes, seeking for all in-range points. 

\begin{figure}[h!]
\vspace*{-0.3cm}
    \centering
    \includegraphics[width=0.85\columnwidth]{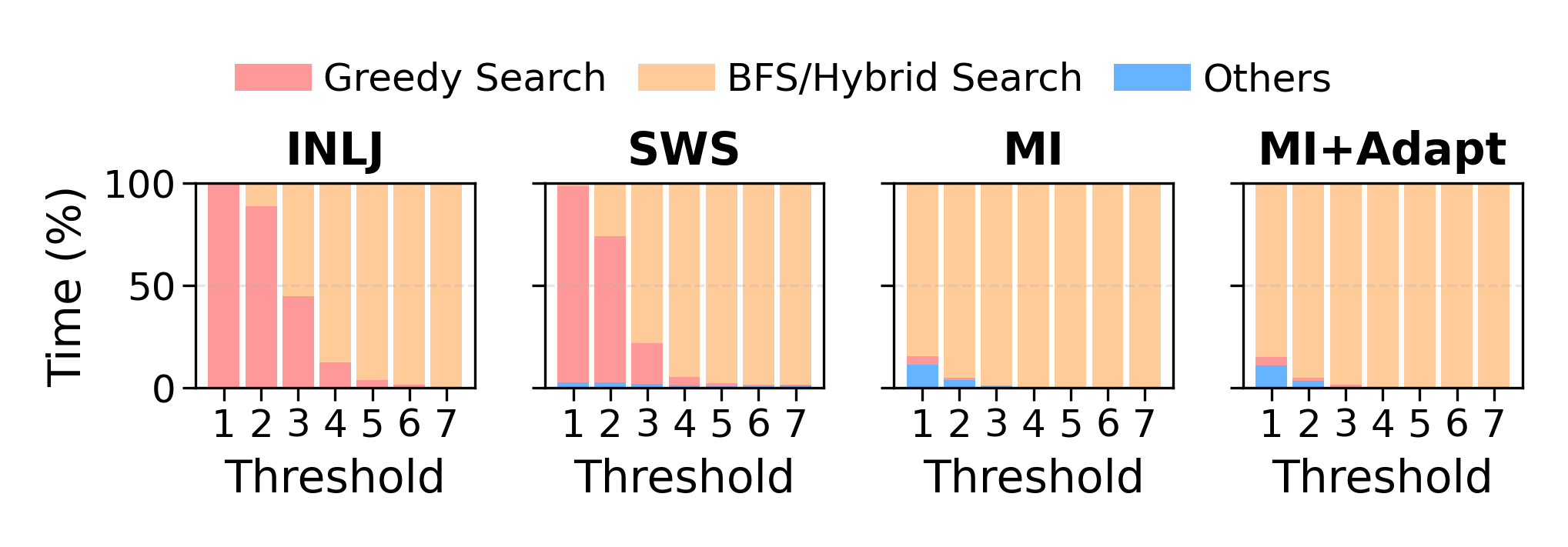}
    \vspace*{-0.4cm}
    \caption{Latency breakdown for \FMNIST into three: greedy search, BFS (or hybrid search), and the others. Similar results observed for the other datasets. HWS is similar to SWS.}\label{fig:profiling_exp}
    \vspace*{-0.4cm}
\end{figure}

\subsection{Query-Level Analysis} \label{subsec:exp:query_level_analysis}

\revisionA{R3.W2}{This section provides a more in-depth query-level analysis. Since per-query latency is often less than 1ms and hard to measure stably, we replace the latency with the number of distance computations (or simply steps) which is the main bottleneck.}

\subsubsection{Hard vs. Soft Work Sharing}

\revisionA{}{
Figure~\ref{fig:sws_vs_hws_matrix} groups queries by relative recall and steps for \SWS vs. \HWS. On \SIFT, both methods achieve identical recall for all queries, while \SWS uses fewer steps for most of them. On \FMNIST, the same pattern largely remains: most queries still have identical recall, and \SWS usually requires fewer steps, although a small number of queries favor either \SWS or \HWS in recall.

Figure~\ref{fig:sws_vs_hws_min_dist_so_far} explains this behavior using the evolution of the minimum computed distance over steps on \FMNIST. In both query groups, \SWS generally starts closer to the query than \HWS, indicating that soft sharing provides a better local initialization by reusing out-range points as well.
However, for queries where \SWS{} has lower recall, this local advantage does not always translate into better final coverage. In such cases, \HWS{} can occasionally reach additional in-range points through larger jumps in the graph, but these jumps are artifacts of the underlying index structure rather than the objective of hard work sharing itself. This suggests that the remaining gap is not fundamentally about stronger work sharing, but about access to a better \emph{global} index structure, which is orthogonal to work sharing.


\begin{figure}[h!]
\vspace*{-0.3cm}
    \centering
    \includegraphics[width=0.98\columnwidth]{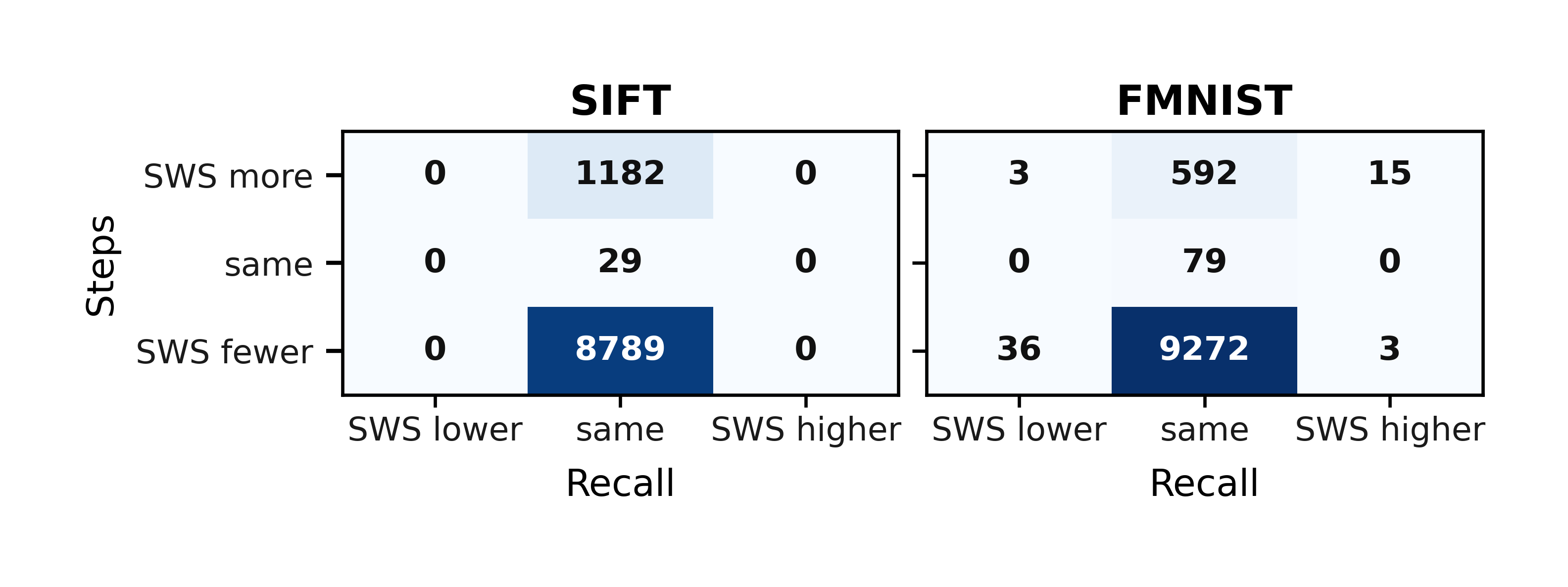}
    \vspace*{-0.8cm}
    \caption{Query frequencies for \SWS vs. \HWS on \SIFT and \FMNIST, using the smallest threshold $\theta_1$.}\label{fig:sws_vs_hws_matrix}
    \vspace*{-0.4cm}
\end{figure}

\begin{figure}[h!]
\vspace*{-0.4cm}
    \centering
    \includegraphics[width=0.98\columnwidth]{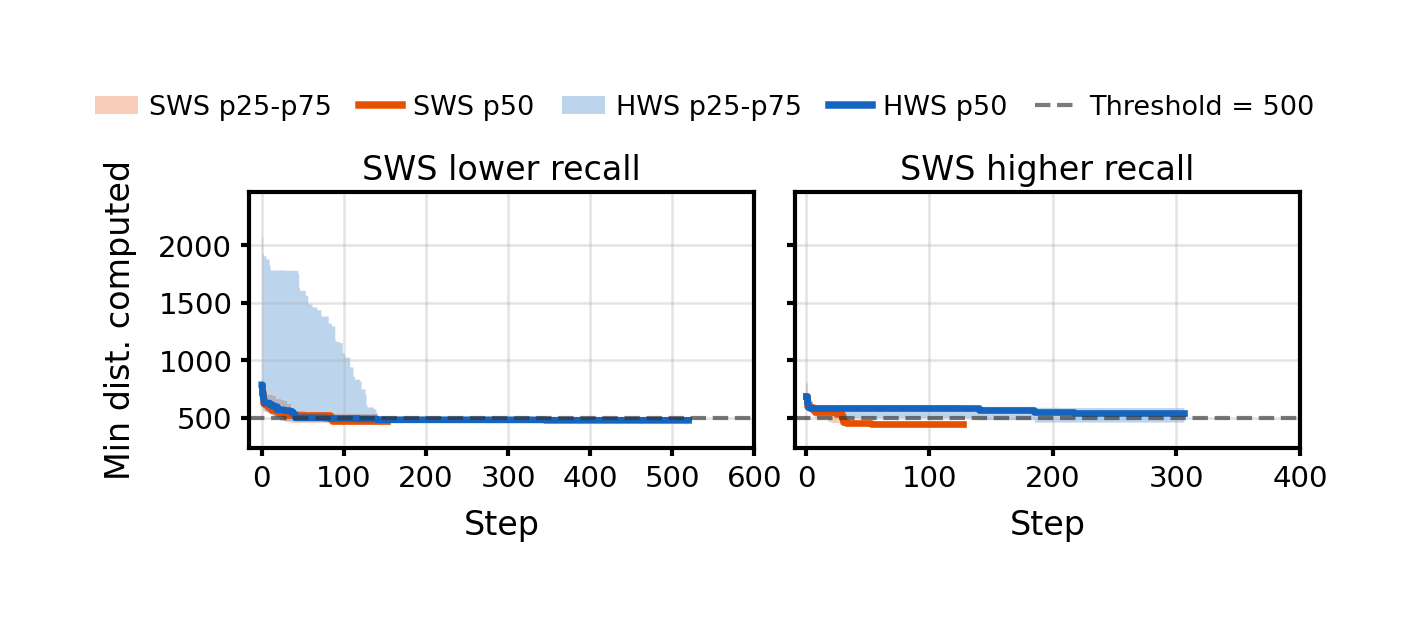}
    \vspace*{-0.8cm}
    \caption{Min. distance computed over steps during query execution, aggregated across queries where \SWS has lower/higher recall than \HWS, on \FMNIST in Figure \ref{fig:sws_vs_hws_matrix}.}\label{fig:sws_vs_hws_min_dist_so_far}
    \vspace*{-0.4cm}
\end{figure}

}

\subsubsection{Separate vs. Merged Index}

\revisionA{}{

We next compare the separate-index design (\INLJ{}) and the merged index (\MI{}) at the query level. Figure~\ref{fig:mi_vs_si_matrix} shows that on \SIFT, both methods achieve identical recall for all queries, while \MI{} uses fewer steps for every query. On \FMNIST, the same pattern largely remains: \MI{} preserves recall for most queries and still requires fewer steps in all cases. This confirms that the primary benefit of \MI{} is to reduce online search effort by starting from a nearby data-side region already exposed by the merged graph, rather than repeatedly discovering it online.

For most \FMNIST queries where \MI{} has lower recall than \INLJ{}, \MI{} still finds at least one match very quickly, but misses few additional matches because the in-range regions are disconnected. Thus, \MI{} does not fail to reach an in-range region; rather, different seeds lead \MI{} and \INLJ{} to different in-range components. Conversely, when \MI{} has higher recall, it again benefits from the same local seed advantage.}
\revisionA{R3.D3}{These results are consistent with our observations on \ImageNet. Even when query and data distributions barely overlap, only 1\% of queries fail to have any data point in their local merged-index neighborhoods. \MI{} almost always reduces search steps, while the recall losses arise from disconnected in-range regions. This motivates the adaptive hybrid traversal for structurally difficult queries while trading off with more steps.}

\begin{figure}[h!]
\vspace*{-0.35cm}
    \centering
    \includegraphics[width=0.98\columnwidth]{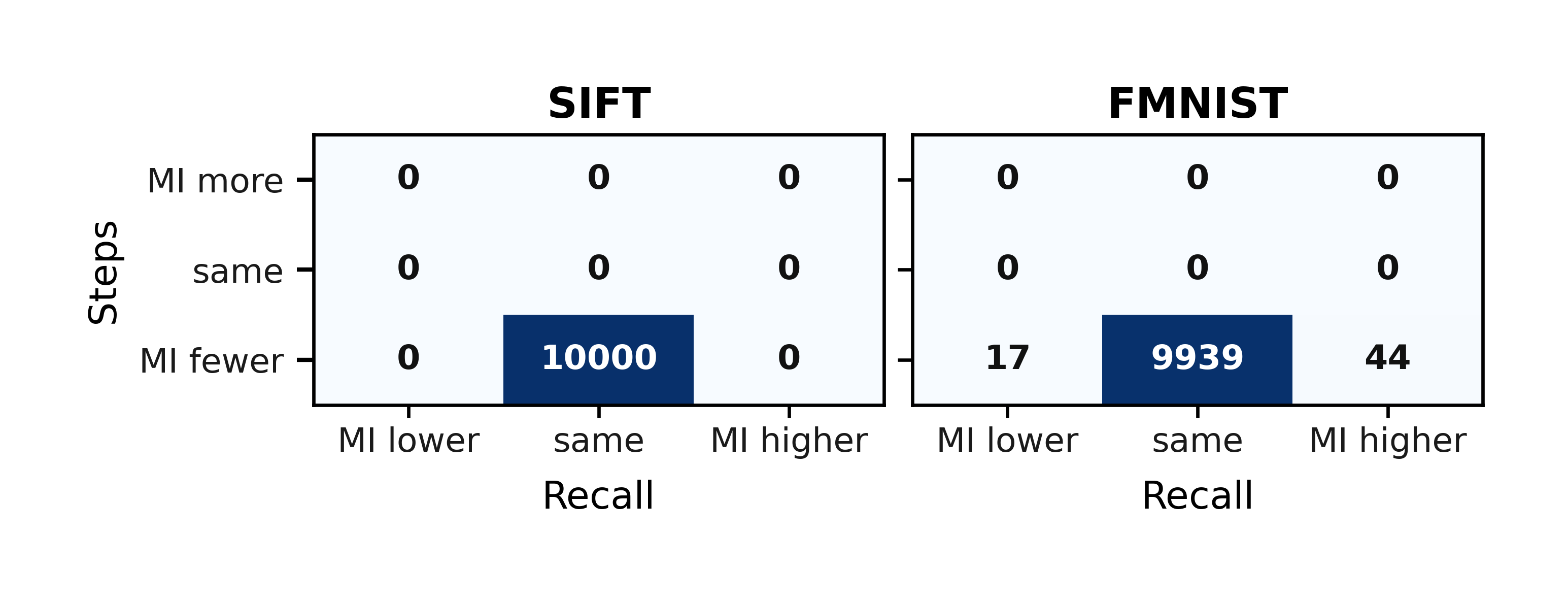}
    \vspace*{-0.8cm}
    \caption{Query frequencies for \MI vs. \ES.}\label{fig:mi_vs_si_matrix}
    \vspace*{-0.6cm}
\end{figure}

\subsection{Equi-Size Join} \label{subsec:exp:equi_size_join}


\revisionA{R2.D3}{

We sample 1M query and 1M data vectors from SIFT1B\footnote{http://corpus-texmex.irisa.fr} dataset to evaluate the methods under $|X|$ as large as $|Y|$,
and four smallest thresholds from \SIFT since all methods have high latencies for larger thresholds.
We call this \SIFTE. In Figure \ref{fig:SIFT_E}, \SWS again reduces latency by up to 2.5x compared to \HWS without sacrificing recall. 

Notably, \MI{} again significantly reduces latency by 44x but shows lower recall of 0.87 than \INLJ's 0.95 at the smallest threshold (for larger thresholds, both achieve high recall close to 1.0) due to the large query set that acts similarly to the out-range wall in the BFS expansion for OOD datasets.
Adding \SWS{} on top of \MI{} (i.e., \MISWS) improves recall to 0.9, which differs from the usual effect of \SWS{} on separate indexes, where it mainly reduces steps. We attribute this to the strong overlap between query and data distributions in \SIFTE{}: the nearest data point reused across queries is often already an in-range seed for another query, so \SWS{} helps \MI{} recover additional relevant regions at almost no extra cost. In fact, it also yields a modest latency reduction of up to 1.2x, likely due to CPU cache effects from processing similar queries in sequence in the MST-based ordering. On the other datasets, where query-data overlap is weaker, this effect is much less pronounced, and \SWS{} on top of \MI{} brings smaller benefit. This suggests that \MI{} and \SWS{} are complementary: \MI{} provides a better structural organization of the search space, while \SWS{} exploits online reuse within it.

\begin{figure}[h!]
\vspace*{-0.2cm}
    \centering
    \includegraphics[width=0.72\columnwidth]{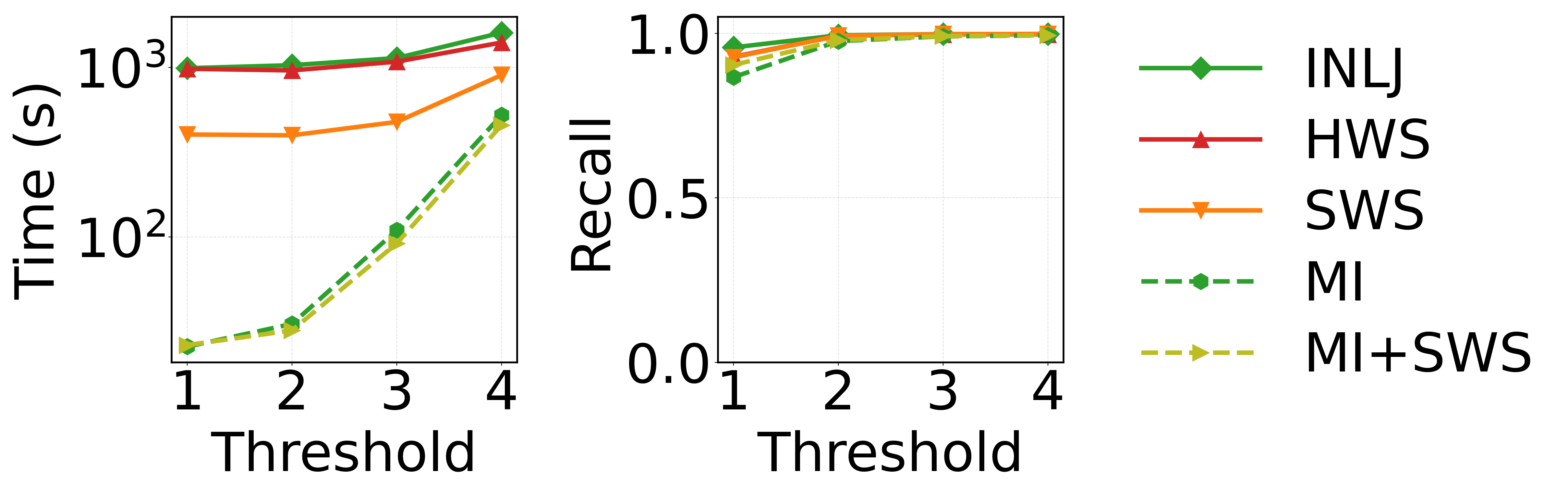}
    \vspace*{-0.4cm}
    \caption{Vector join performance on \SIFTE using four thresholds $\theta_1$-$\theta_4$.}\label{fig:SIFT_E}
    \vspace*{-0.2cm}
\end{figure}

}

\vspace*{-0.2cm}
\subsection{Offline Overhead} \label{subsec:exp:offline_overhead}

Figure \ref{fig:index_building} compares the index size and build time of using two separate indexes -- one for queries $G_X$ and one for data $G_Y$ -- against merged index $G_{X \cup Y}$. Across all datasets, merged index does not introduce much overheads compared to separate indexes, \sigmod{up to additional 4.5\% size and 0.7\% time}, because the merged index uses the same graph structure and hyperparameters (e.g., max neighborhood size), and the total number of vectors is the same as $|X|+|Y|$ (\revisionA{R1.D5}{refer to Appendix B for theoretical analysis}).
Since index construction is an offline process performed once per dataset, small additional overhead is negligible relative to the substantial runtime improvements achieved during online vector joins.

\begin{figure}[h!]
\vspace*{-0.2cm}
    \centering
    \includegraphics[width=0.96\columnwidth]{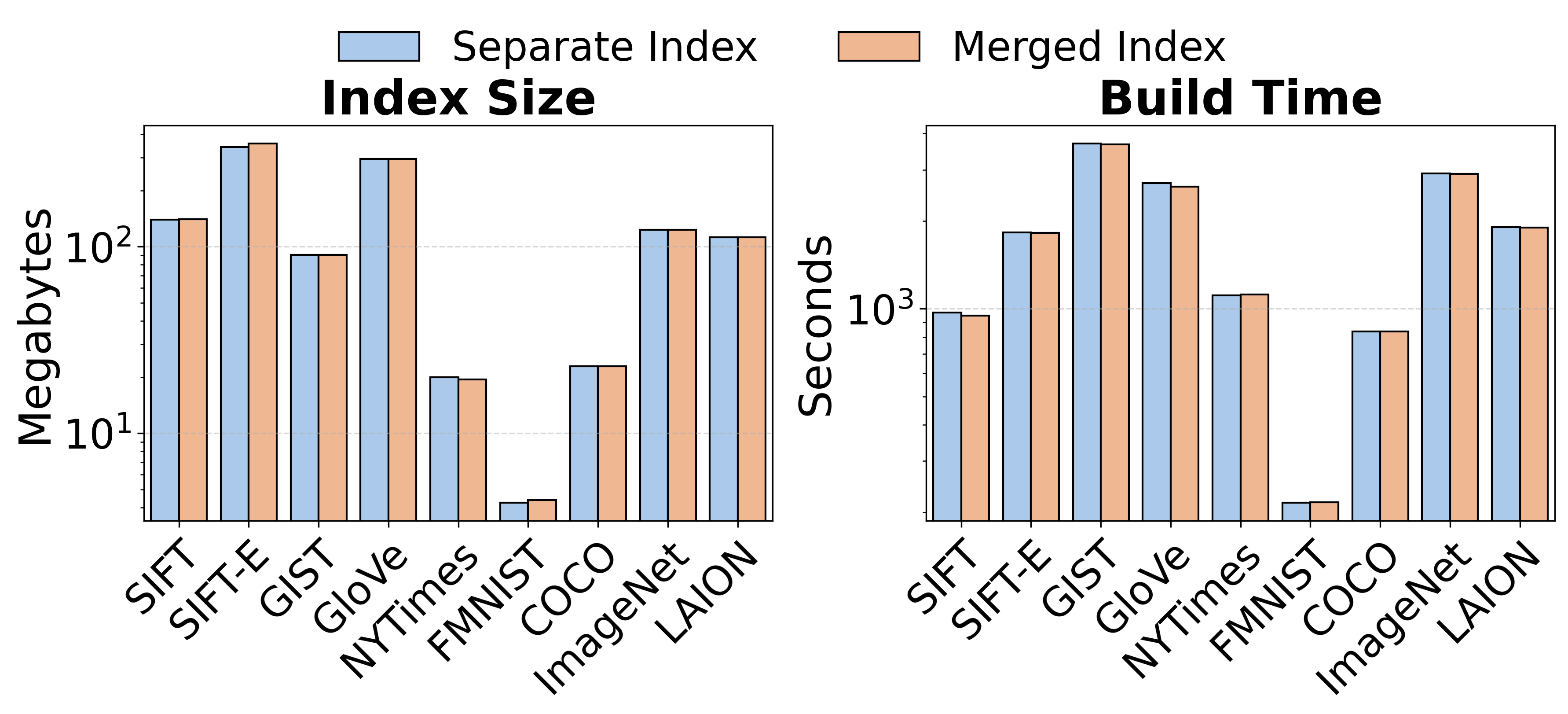}
    \vspace*{-0.3cm}
    \caption{Offline overheads (index size and indexing time) for separate query and data indexes vs. merged index.}\label{fig:index_building}
\end{figure}











\vspace*{-0.2cm}
\subsection{Scalability Test}\label{subsec:exp:scalability}


To assess methods under varying data scales, we vary the number of data vectors $|Y|$ using the SIFT1B dataset and the smallest threshold, while keeping the number of queries at 10K. We vary $|Y|$ from 10K to 10M and sample $|Y|$ vectors from SIFT1B. 
10M is the max data size we can process in memory. \kkim{We leave supporting billion-scale data using multiple servers or disks as a future work.}
Figure \ref{fig:scalability_exp} reports join latency as $|Y|$ increases. Recall is near 1.0 for all methods.

\NLJ scales linearly with $|Y|$, as expected, since it computes all pairwise distances. 
\revisionA{}{\INLJ scales better than \NLJ \nata{(each query takes $O(\log |Y|)$ \cite{HNSW, NSG})}}
but still exhibits noticeable growth due to repeated greedy searches. \HWS reduces some redundancy, but its performance degrades for large $|Y|$ due to increasing cache sizes and BFS expansion costs.
\SWS shows improved scalability by further reducing redundant greedy search across similar queries by caching closest out-range points. 
\nata{MI achieves the best scalability: as predicted by the constant-time in-range region discovery of Section \ref{subsec:ours:mi}, its greedy search latency scales near-horizontally, confirming the practical transition from the $O(\log |Y|)$ traversal to an $O(1)$ neighborhood probe as the data size grows.}


\begin{figure}[h!]
\vspace*{-0.2cm}
\centering
\includegraphics[width=0.65\columnwidth]{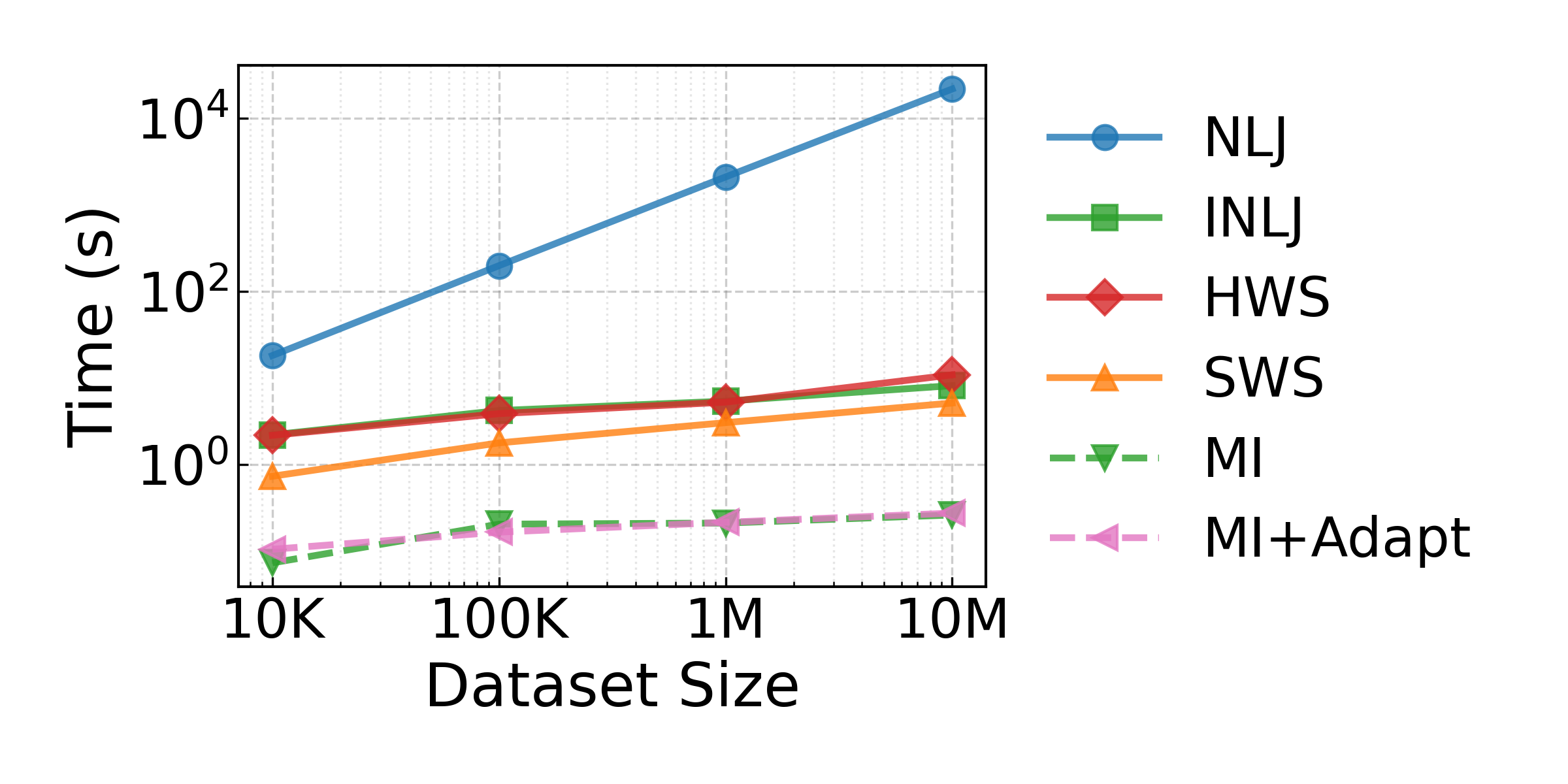}
\vspace*{-0.5cm}
\caption{Scalability test with varying number of data vectors, using the smallest threshold $\theta_1$.
}\label{fig:scalability_exp}
\vspace*{-0.2cm}
\end{figure}




\vspace*{-0.2cm}
\subsection{Varying Index Type}\label{subsec:exp:ablation}



So far, our experiments have used the NSG graph index \cite{NSG} following \cite{SimJoin}. In this section, we evaluate whether our techniques generalize to a different index structure, HNSW \cite{HNSW}, which is also widely used in modern vector search systems \cite{douze2024faiss, wang2021milvus}.

Figure \ref{fig:index_comparison} compares the performance of all methods using HNSW graphs instead of NSG on \FMNIST and \kkim{\ImageNet}. 
The ranking of methods remains consistent:
\MI and \MIA achieve the best latency-recall trade-offs, followed by \SWS, \HWS, and \INLJ. This shows that our \revisionA{}{framework and techniques} are index-agnostic and not tailored specifically to NSG. \todo{While we set the HNSW index sizes similar to the NSG indexes we used, in general HNSW gives lower performance (especially recall) than NSG. This supports the choice of using NSG by default in \cite{SimJoin}.}





\begin{figure}[h!]
\vspace*{-0.2cm}
\centering
\includegraphics[width=0.92\columnwidth]{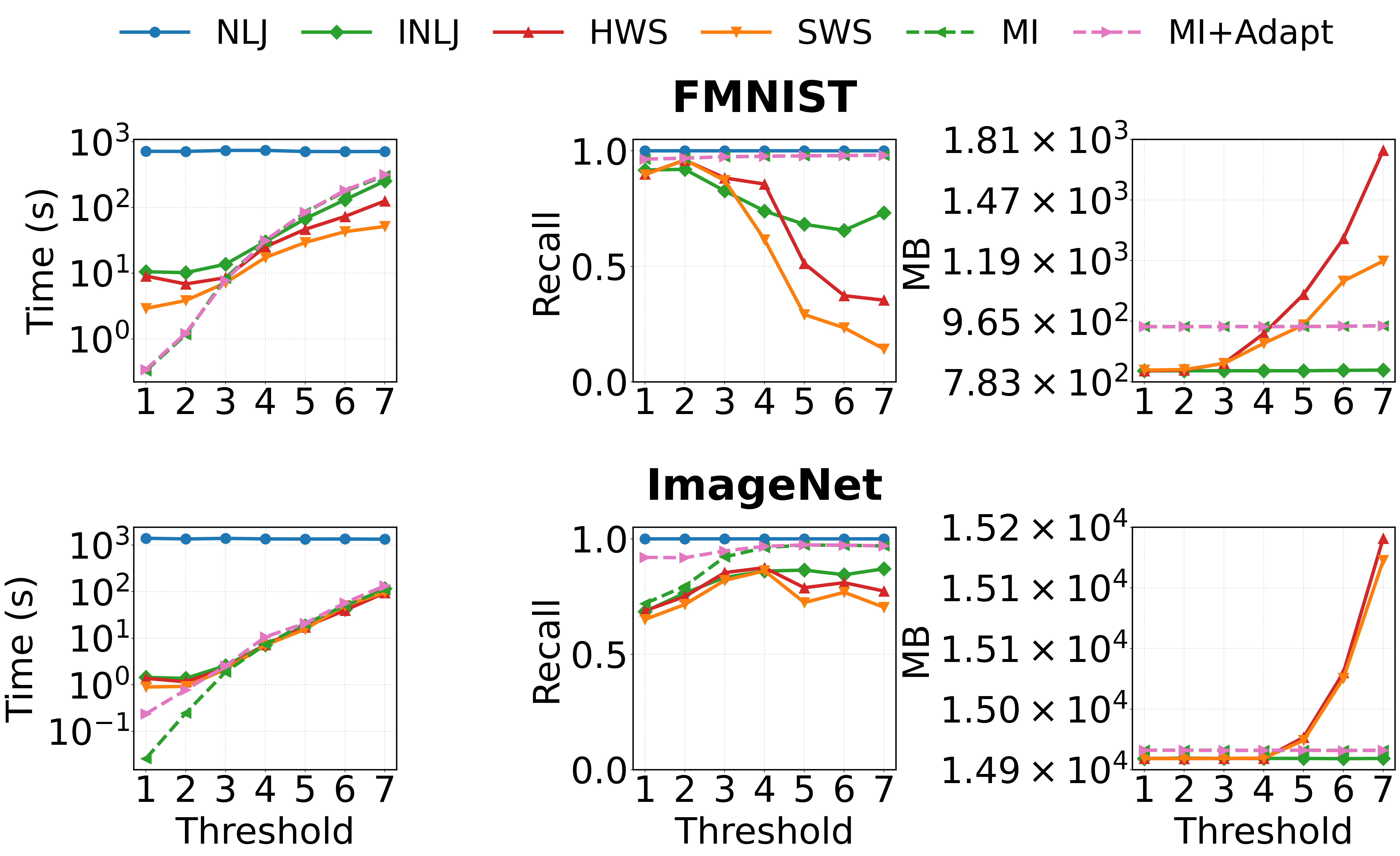}
\vspace*{-0.3cm}
\caption{Vector join performance using HNSW index \cite{HNSW} on \FMNIST and \ImageNet.}\label{fig:index_comparison}
 \vspace*{-0.2cm}
\end{figure}


\begin{figure*}[t]
\centering
\begin{minipage}[b]{0.22\textwidth}\centering
\includegraphics[width=\linewidth]{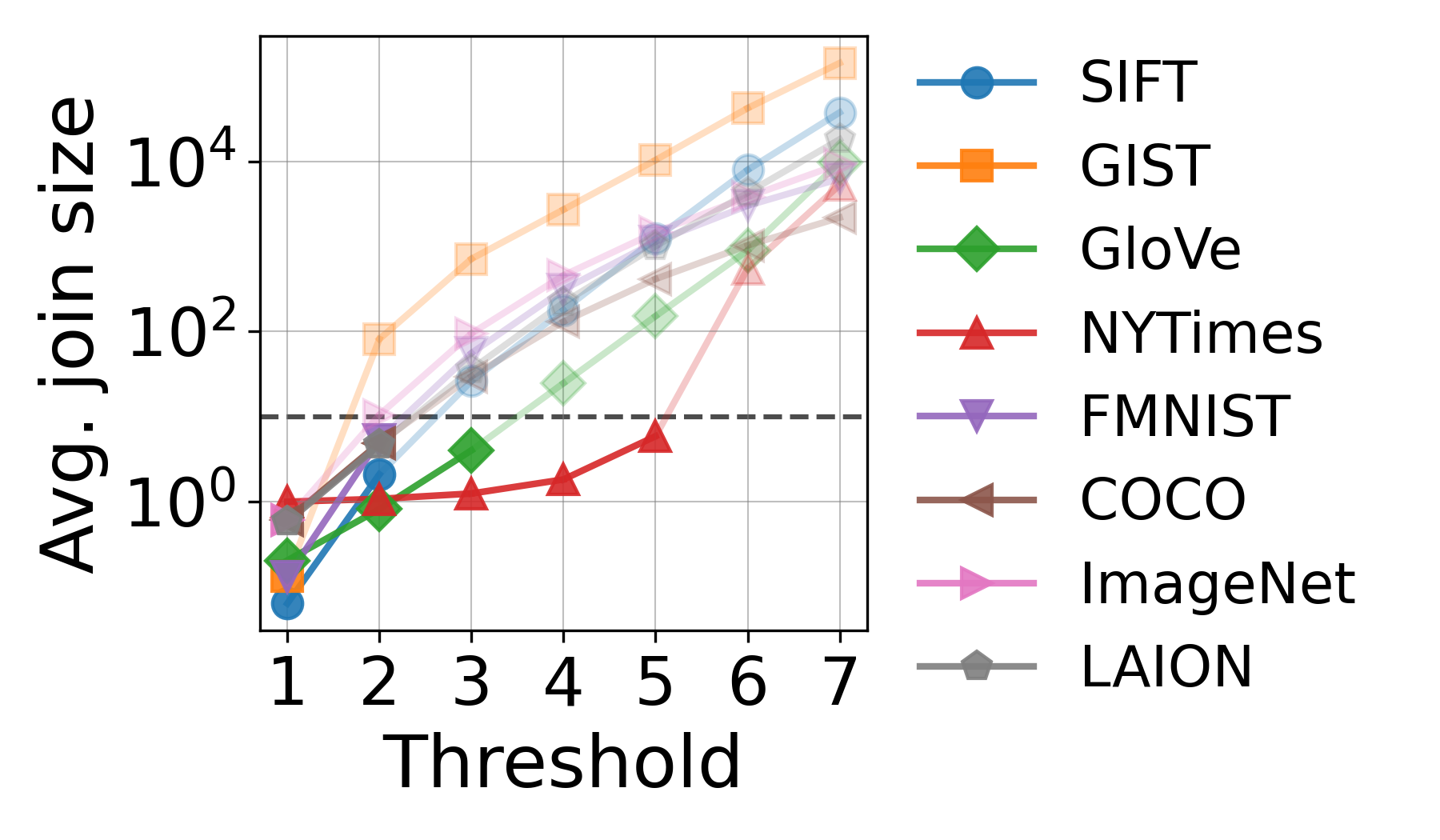}\\[-7pt]
{\footnotesize (a)}
\end{minipage}\hfill
\begin{minipage}[b]{0.265\textwidth}\centering
\includegraphics[width=\linewidth]{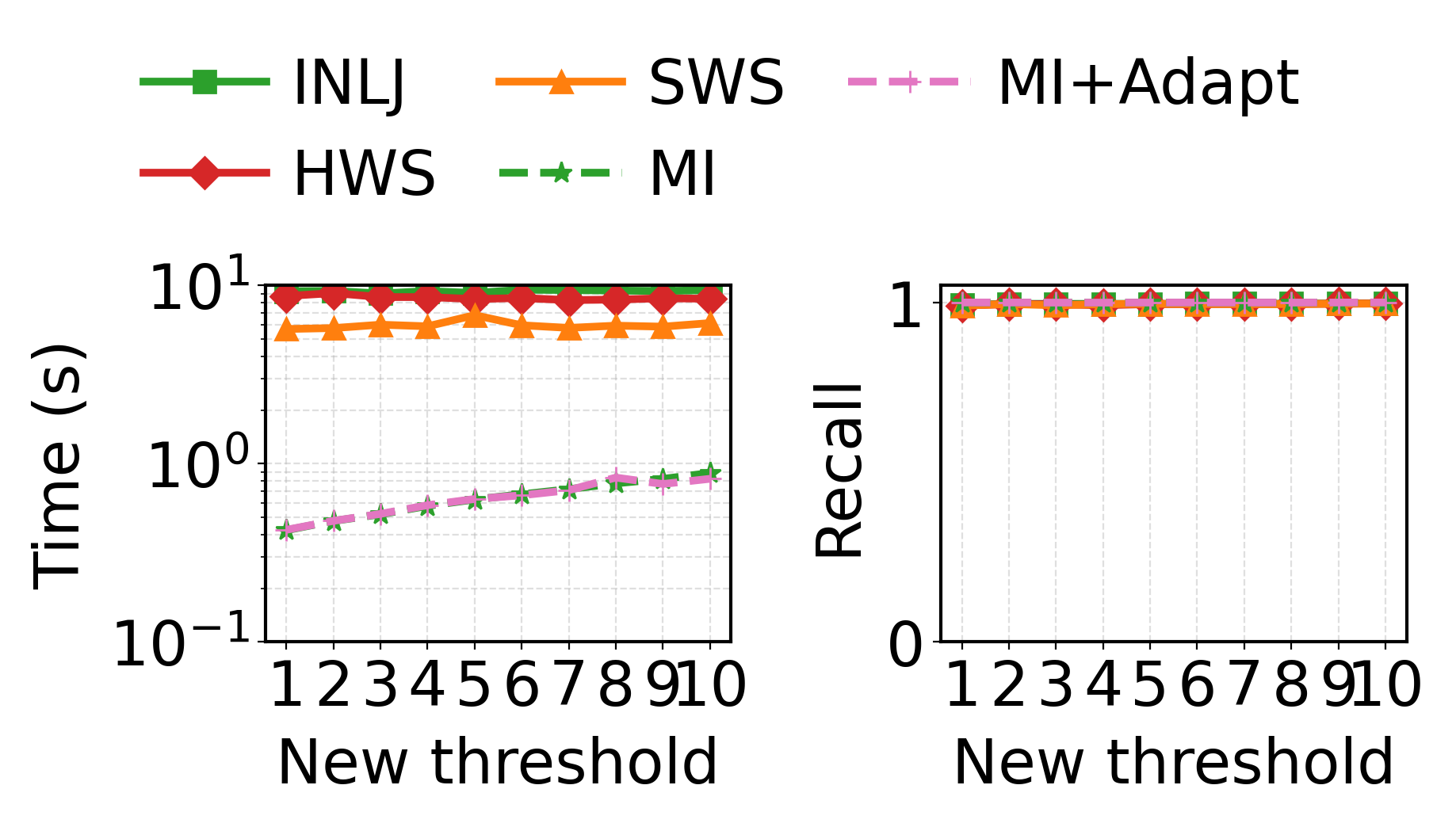}\\[-7pt]
{\footnotesize (b)}
\end{minipage}\hfill
\begin{minipage}[b]{0.24\textwidth}\centering
\includegraphics[width=\linewidth]{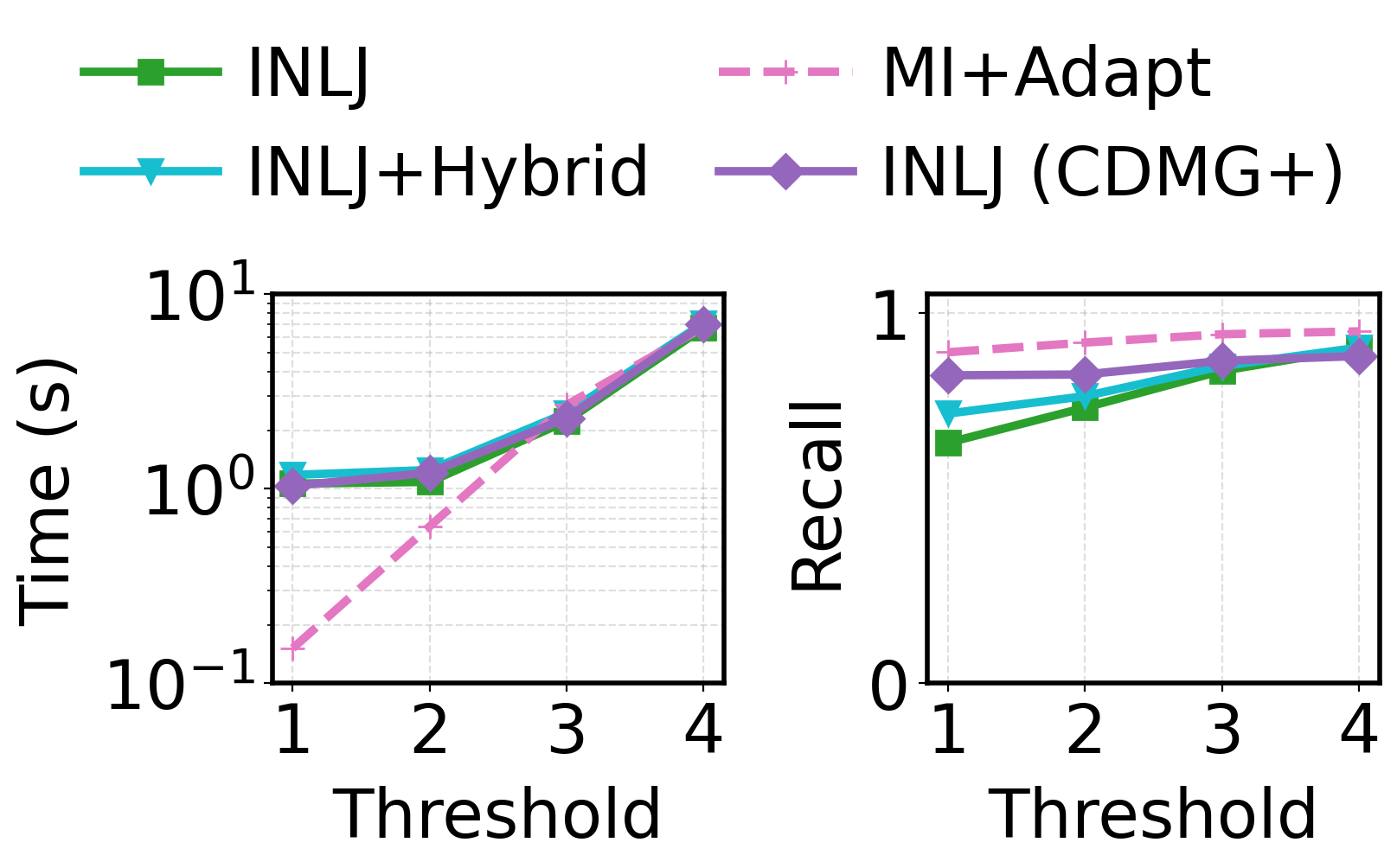}\\[-7pt]
{\footnotesize (c)}
\end{minipage}\hfill
\begin{minipage}[b]{0.253\textwidth}\centering
\includegraphics[width=\linewidth]{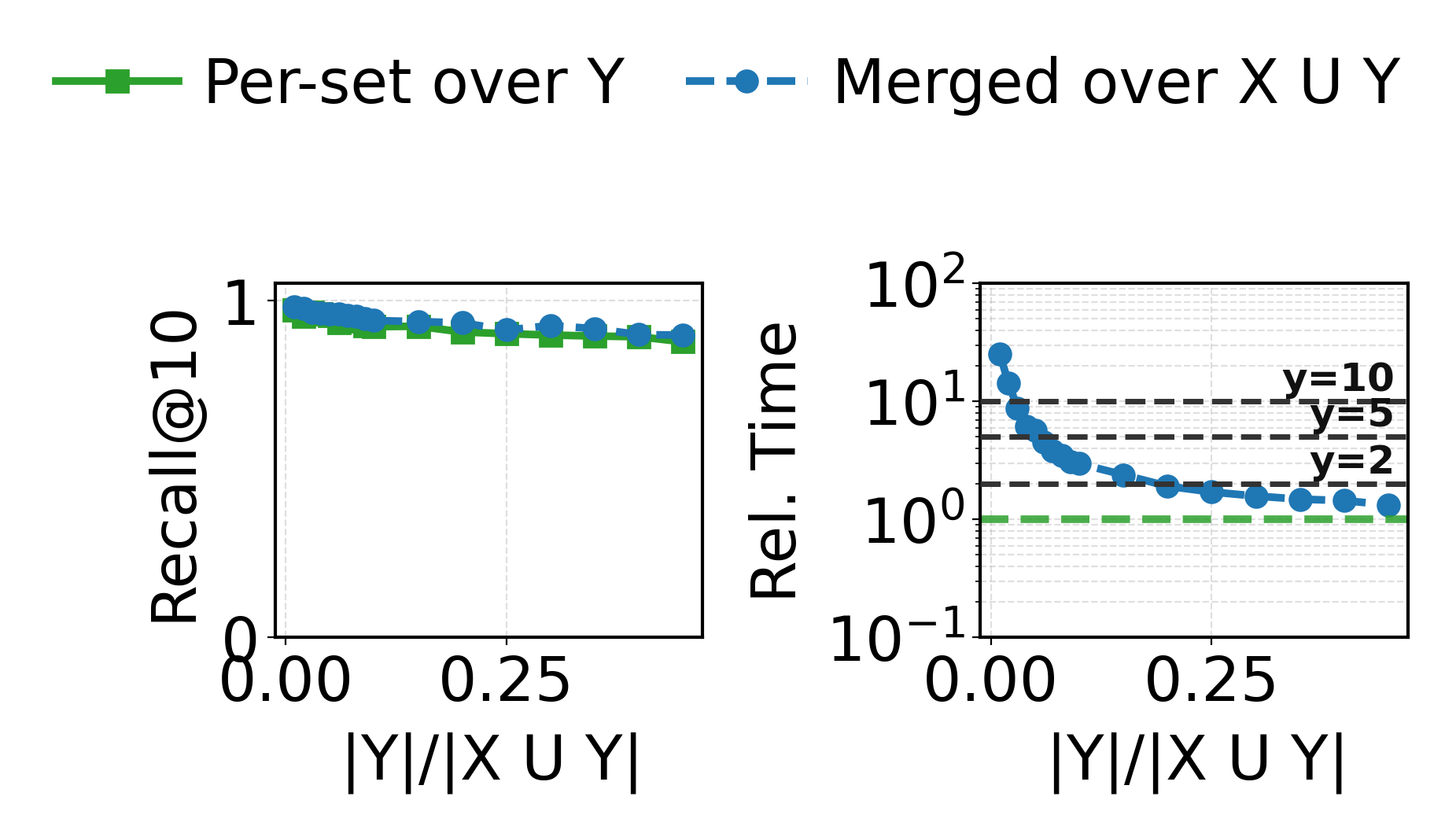}\\[-7pt]
{\footnotesize (d)}
\end{minipage}
\vspace*{-0.35cm}
\caption{\moved{
\textbf{(a)} Avg.\ join size $k$ per query vs.\ our thresholds in Table \ref{tab:epsilon_mapping}; the band $k \leq 10$
is \SimJoin's~\cite{SimJoin} regime.
\textbf{(b)} Finer \SIFT\ thresholds calibrated to $k=1,\dots,10$:
\MI/\MIA\ stay about an order of magnitude faster than \INLJ/\HWS/\SWS\ at recall
$\approx$1.
\textbf{(c)} On the OOD dataset \ImageNet, \INLJ\ with forced hybrid search (no adaptivity based on the merged-index signal, Section \ref{subsec:ours:ood_hybrid}) and
\INLJ\ over the state-of-the-art OOD graph index CDMG+ \cite{CDMG} both improve recall but
remain below \MIA.
\textbf{(d)} Single-set top-$10$ search (ANNS) on a per-set index over $Y$ vs.\ the merged index over $X\cup Y$ on
\SIFT\ (merged index fixed at $1$M vectors): recall matches while the merged-index latency overhead decreases from $\sim$25$\times$ to $\sim$1.3$\times$ as $|Y|/|X\cup Y|$ grows from 1\% to 45\%.}}
\label{fig:revision}
\vspace*{-0.3cm}
\end{figure*}

\subsection{Threshold Regimes and Join Size}\label{subsec:exp:revision_thresholds}

\moved{\textbf{Answer to: Why all do methods show similar performance?} Small and large thresholds are two genuinely different use cases, and the dominant one is \emph{small} thresholds with small join sizes---the analogue of small $k$ in top-$k$ search \cite{HNSW, NSG}; \SimJoin~\cite{SimJoin} likewise tunes its thresholds to $\leq$10 results per query. Figure \ref{fig:revision}(a) puts our threshold design in this perspective: our smallest thresholds per dataset correspond to \SimJoin's regime, while the larger ones deliberately extend the sweep for completeness. In the small-threshold regime, work sharing and merged indexing target the greedy-search overhead, which is the main bottleneck (Figure \ref{fig:profiling_exp}), and our methods win decisively. To test this regime directly, we re-ran \SIFT\ with ten finer thresholds calibrated to an average join size of exactly $k=1,2,\dots,10$: \MI\ and \MIA\ stay about an order of magnitude faster than \INLJ, \HWS, and \SWS\ at recall $\approx$1 throughout (Figure \ref{fig:revision}(b)).}

\moved{\emph{Large} thresholds are a distinct, under-explored regime. There, the post-greedy expansion dominates \emph{every} method, which must enumerate all (up to $>10^4$) results per query (Figures \ref{fig:revision}(a) and \ref{fig:profiling_exp}), so the methods converge; this degradation is universal---\cite{DiskJoin} also reports diminishing gains at large join sizes, and very recent work targets large-$k$ search specifically \cite{yin2026bbc}---and we do not claim to solve it. We report this unfavorable regime for a fair, complete comparison rather than showing only favorable cases (Appendix E discusses a query-data switching optimization for it). Relatedly, on \NYTimes\ at $\theta_7$, \INLJ\ is faster but at lower recall, and \MIA's recall decreases due to (1) the low connectivity of the index over highly skewed data (low ``Mode'' in Table \ref{tab:datasets}) and (2) the large per-query join size.}

\subsection{Comparison with a Dedicated OOD Index}\label{subsec:exp:revision_cdmg}

\moved{\textbf{Answer to: What if we use a SOTA OOD index for \INLJ? Why do other ideas than \MI give marginal performance gains?} Even a dedicated state-of-the-art OOD index does not close the OOD recall gap that our adaptive hybrid search closes. On \ImageNet, we strengthen \INLJ\ in two ways: replacing its BFS expansion with our hybrid search forced on for all queries (\INLJ+Hybrid), and replacing its underlying NSG index with CDMG+ \cite{CDMG}, a state-of-the-art graph index specialized for OOD search. Both improve recall over plain \INLJ\ but remain below \MIA, while \MIA\ is also faster at small thresholds (Figure \ref{fig:revision}(c)); we observe the same on \COCO\ and \LAION. This result also puts the gains of our online components in context: CDMG+ reports a $\sim$3.6$\times$ speedup over its baseline \cite{CDMG}, the same order as the $\sim$3$\times$ latency reduction of \SWS\ over \HWS\ (Section \ref{subsec:exp:base}).}

\subsection{Single-Set Search on the Merged Index}\label{subsec:exp:revision_single_set}

\moved{\textbf{Answer to: Wouldn't the merged index degrade single-set search performance?} The merged index is a \emph{join accelerator}: it complements rather than replaces per-set indexes for search-heavy workloads. Since the merged index is larger, a single-set top-$k$ search on it traverses out-set nodes before collecting enough in-set ones. We quantify this trade-off on \SIFT\ (Figure \ref{fig:revision}(d)): single-set top-$10$ search on the merged index matches the per-set index's recall, while its relative latency depends on the ratio $r=|Y|/|X\cup Y|$: $1.3\times$, $2\times$, $5\times$, and $10\times$ for $r = 0.45, 0.2, 0.06$, and $0.03$. Obviously, this relative latency decreases to $1$ as $r$ increases to $1$. This is a clean knob for practitioners. For join-heavy workloads, the modest single-set overhead is outweighed by the up to $44\times$ join speedup; for search-heavy workloads with small $r$, one keeps a per-set index alongside, raising index maintenance cost to $(1{+}r)\times$ (at most $2\times$), optionally switching between the two adaptively. Section \ref{sec:nnmap} presents our ongoing design that eliminates this trade-off altogether.}


\vspace*{-0.2cm}
\section{NN-Map: A Materialized Join Index for Vectors}\label{sec:nnmap}

\planned{\textbf{Answer to: Isn't the contribution too incremental?} We close our offline--online co-design by presenting \emph{NN-Map}. NN-Map replaces the merged index, keeps the per-set separate indexes untouched and instead materializes, per join pair, the \emph{answer} of the greedy phase itself: the top-1 nearest neighbor of each query in the other set. This retains every benefit of the merged index---threshold-agnostic reuse, offloaded in-range discovery, and the OOD signal---while eliminating its remaining trade-offs identified in Section \ref{subsec:exp:revision_single_set}: single-set searches run on unchanged per-set indexes at native speed, maintenance reduces to local updates, and supporting $N$ sets costs one integer array per joined pair rather than a union/merged graph. In short, where graph indexes navigate \emph{between vectors}, NN-Map navigates \emph{between vector sets}.}

\vspace*{-0.1cm}
\subsection{Structure and Guarantees}\label{subsec:nnmap:structure}

\vspace*{-0.05cm}
\begin{definition2}
\textbf{(NN-Map)}
\planned{Given two vector sets $X_i$ and $X_j$ from the same encoder, the NN-map $w_{i \to j}$ stores, for each $x \in X_i$, the identifier $w[x]$ of its top-1 nearest neighbor in $X_j$, the distance $dist[x] = \dist(x, w[x])$, and an OOD bit for $x$ with respect to $X_j$.}
\end{definition2}
\vspace*{-0.1cm}

\planned{Each set keeps only its ordinary \emph{home} (per-set) index $G_{X_i}$; no merged structure exists. A map is stored from the smaller set to the larger one, and the reverse join direction is served by the inverted list of the same array, so the overhead per joined pair is $\min(|X_i|, |X_j|)$ integers (plus a float and a bit). NN-Map is the vector counterpart of the classical \emph{join index} \cite{valduriez1987join}: a compact materialized access path between two relations, kept separate from either relation's own indexes. We accordingly call it a \emph{vector join index}.}

\planned{Online execution becomes a binary decision followed by the unchanged expansion phase. For a query $q$ and threshold $\theta$: if $dist[q] \geq \theta$, the result is certifiably empty and the query terminates without a single distance computation; otherwise, expansion starts directly from the in-range seed $w[q]$ in the home index $G_{X_j}$, in BFS or hybrid mode according to the OOD bit. The greedy phase does not become cheaper---it disappears.}

\vspace*{-0.05cm}
\begin{lemma2}
\textbf{(Witness Property)}
\planned{Modulo the ANN accuracy of the offline build, for every threshold $\theta$: (i) if $dist[q] \geq \theta$, then $q$ has no join match in $X_j$; (ii) otherwise, $w[q]$ is an in-range entry point for $q$. Moreover, the witness set $\{w[x]\}$ is minimal: removing the witness of any query loses the guarantee for some $(q, \theta)$.}
\end{lemma2}
\vspace*{-0.1cm}

\planned{This upgrades the high-probability entry guarantee of the merged index (Appendix C) to one that holds \emph{always} and for \emph{all} thresholds, since nearest neighbors are threshold-independent---NN-Map remains fully threshold-agnostic. The guarantee concerns entry only: matches in disconnected in-range regions remain the job of hybrid search (Section \ref{subsec:ours:ood_hybrid}), triggered by the precomputed OOD bit.}

\vspace*{-0.15cm}
\subsection{Build and Maintenance}\label{subsec:nnmap:build}

\planned{Building $w_{i \to j}$ is itself an offline batch top-1 join: each $x \in X_i$ is routed through the home index of $X_j$ with an insertion-style search that commits no edges, recording only the top-1 identifier, its distance, and the OOD score that the merged index would have derived online. The build thus costs the same asymptotic time as inserting $X_i$ into a merged index, while the memory of the merged organization disappears entirely. Notably, our own framework bootstraps this build: the batch top-1 join is executed with SWS ordering (Section \ref{subsec:ours:sws}), i.e., our join algorithm constructs its own index.}

\planned{Maintenance is local by construction, with no rebuild. Inserting $x$ into the query side is one insertion search (which returns $w[x]$ as a by-product); deleting $x$ decrements a reference count on its witness. Inserting $y$ into the data side re-checks only the reverse-nearest-neighbor queries in $y$'s vicinity; deleting $y$ re-searches only the queries whose witness was $y$, tracked by per-node owner lists. Each operation touches work proportional to the affected witnesses, directly resolving the maintenance question raised for merged structures (Section \ref{subsec:ours:mi}).}

\vspace*{-0.15cm}
\subsection{Map Selection and Composition}\label{subsec:nnmap:selection}

\planned{\textbf{Answer to: Isn't the solution limited to joins between only two sets?} With $N$ sets, which maps to materialize becomes an optimization on a set-level graph whose nodes are datasets and whose candidate edges are maps---the macro counterpart of index selection. Under a memory budget $B$, we choose a coverage $c_{ij} \in (0, 1]$ (the fraction of $X_i$ materialized) per candidate pair ($X_i, X_j$), with cost $\sum c_{ij}|X_i| \leq B$ and benefit equal to the \emph{entry-cost savings only}, since expansion is unaffected:
$\mathit{benefit}_{ij}(c) = f_{ij} \, |X_i| \left[ c \cdot \bar{g}_j + (1-c) (\bar{g}_j - \tilde{g}_j(c)) \right]$,
where $f_{ij}$ is the join frequency, $\bar{g}_j$ the average cold-start greedy cost, and $\tilde{g}_j(c)$ the residual cost of warm-starting an uncovered query from its nearest covered query via SWS (measured by a pilot run). Under partial coverage, covered queries execute first (free entry and instant empty certificates) and then serve as anchors for the rest that use our work sharing (SWS). Since earlier anchors contribute more, the benefit is concave in $c$, so a marginal-benefit greedy allocation across pairs is effective.}

\planned{Unmaterialized pairs are served by \emph{composing} stored maps: for a join $X_i \Join X_j$ routed through $X_k$, the composed seed is $s = w_{k \to j}[\,w_{i \to k}[q]\,]$, and the stored distances $d_1 = dist_{i \to k}[q]$ and $d_2 = dist_{k \to j}[w_{i \to k}[q]]$ yield, with zero additional distance computations,
$d_2 - d_1 \,\leq\, \dist(q, y^*) \,\leq\, d_1 + d_2$
for the true nearest $y^* \in X_j$; in particular, $d_2 - d_1 \geq \theta$ certifies an empty result. Longer $h$-hop compositions telescope, widening the interval by the accumulated anchor distances. Materializing a map is therefore exactly \emph{buying this uncertainty width down to zero}, which quantifies its value inside the benefit function above. Composition is also safe: seeding from the closer of the composed seed and the default entry never does worse than the baseline greedy search. Together, selection and composition make ``no $N(N{-}1)/2$ structures needed'' a property of the design rather than a claim: one home index per set, plus a budgeted subset of integer arrays, serves all pairwise joins as well as single-set searches at no extra cost.}

\vspace*{-0.15cm}
\subsection{NN-Map as Join Statistics}\label{subsec:nnmap:statistics}

\planned{The stored distances double as join statistics with an honest accuracy boundary. The count $|\{x : dist[x] < \theta\}|$ is the \emph{exact} semi-join cardinality---the number of queries with at least one match---for any $\theta$, which is precisely the question deduplication and contamination detection ask. Extending each entry to a top-$k$ prefix makes the full join size exact while $dist_k(q) > \theta$, i.e., throughout the tight-threshold regime that dominates practice (Section \ref{subsec:exp:revision_thresholds}); beyond it, a $k$NN-density extrapolation takes over: the head is exact, only the tail is modeled. This position differs from learned cardinality estimation for high-dimensional search, which featurizes distances to reference pivots \cite{lan2024cardinality}: in the known-query join setting, the references are the actual nearest neighbors and the stored distances act as answers rather than features.}

\planned{These statistics feed four consumers: (i) join \emph{direction} selection, comparing the estimated cost of either set as the outer (automating the crossover of Appendix E); (ii) join \emph{selectivity} for vector-relational optimizers, letting vector joins participate in cost-based plans alongside relational operators; (iii) per-query execution tuning, such as hybrid activation and result-buffer preallocation; and (iv) threshold recommendation, answering ``which $\theta$ yields roughly $r$ results'' by binary search on the distance CDF---the same estimate can also warn, before execution, that a chosen threshold would produce an impractically large result.
In our evaluation centering on the tight-threshold regime---where entry cost is the sole bottleneck and NN-Map removes it entirely, along with $N$-set overhead against a union graph, maintenance throughput, and the accuracy of the composition bounds.}

\vspace*{-0.2cm}
\section{Conclusion}\label{sec:conclusion}


\revisionA{}{

We presented an offline--online co-design for fast approximate vector joins in memory and proposed soft work sharing, encoder-centric merged index, and OOD-aware hybrid search to address three limitations of prior work: redundant traversal across similar queries, redundant discovery of cross-set geometry, and recall loss on difficult OOD queries. 
Our unified framework for vector join allows expressing various methods within one common pipeline, fairer comparisons, cleaner ablations, and a clearer understanding of how different design choices affect efficiency and recall. Since the framework treats the underlying graph index as a black box, it remains simple and generalizable across existing graph indexes rather than tied to a bespoke structure.
Due to its simplicity, we believe this offline--online view opens up a broader design space for future vector joins in disk-based, distributed, and vector-relational settings with filters.
Future work also includes extension to multi-table joins, non-graph-based indexes, and streaming data support.
\planned{We further propose NN-Map that materializes the join itself rather than merging the indexes, together with its map-selection optimizer and join-statistics consumers.}
}

\bibliographystyle{ACM-Reference-Format}
\bibliography{ref}

\clearpage

\section*{Appendix}
\appendix

\section{Practical Deployment Scenario}
\label{subsec:tech:deployment_scenario}

\revisionA{R2.D4}{

\moved{\textbf{Answer to: In which applications should the proposed approach be used?}} A realistic end-to-end deployment scenario for our framework is cross-source product matching in e-commerce \cite{clusterjoin, tracz2020bert, li2020deep}. Suppose a marketplace maintains a large reference catalog $Y$, while a new seller uploads a batch of product records $X$. Both catalogs are encoded by the same text or multimodal encoder \cite{SimJoin}, so their vectors already lie in a shared embedding space. The system then performs a threshold-based vector join to retrieve all candidate product pairs $(x,y)$ within distance threshold $\theta$ \cite{wang2024xling, SimJoin}, which are subsequently passed to a downstream verifier such as a rule-based matcher, a learned reranker, or human review.

This workflow highlights the practical role of our three components. First, because $X$ and $Y$ are produced by the same encoder, an encoder-centric merged index can be a natural indexing abstraction: it avoids artificially splitting a shared embedding space into separate query-side and data-side structures, and it offloads the online greedy-search effort into offline index construction. Second, seller uploads often exhibit locality, e.g., many products from the same brand, category, or style arrive together, so soft work sharing can reuse useful search effort across similar queries. Third, some products are structurally difficult, such as rare items, noisy titles, or records with sparse metadata; for these OOD-like queries, candidate matches may lie in multiple disconnected in-range regions, making OOD-aware hybrid search important for recall. Thus, even without introducing a new end-to-end benchmark, this scenario illustrates how merged indexing, soft work sharing, and hybrid traversal fit naturally into a practical shared-encoder vector-join pipeline.
\sigmod{We leave evaluation on task-specific benchmarks as future work, as our focus is on the join primitive itself.}
}

\section{Theoretical Analysis for Indexing Overhead}\label{subsec:tech:theoretical_analysis}

\revisionA{R1.D5}{
We now analyze the theoretical overhead of the merged index compared to separate indexes on $X$ and $Y$. The key question is whether jointly indexing both sets introduces a substantially larger indexing cost. We show that, under standard assumptions for graph-based indexes, the merged design preserves the same asymptotic space and time complexities. 
Therefore, the merged index should be viewed not as a heavier index structure, but as a different allocation of essentially the same indexing cost.
}

\revisionA{}{
\begin{theorem}
\textbf{(Overhead of Merged Indexing)} 
Let $X$ and $Y$ be the two vector sets to be joined, with $|X|=n_x$ and $|Y|=n_y$. Assume both the separate-index design and the merged-index design use the same neighbor-size bound $m$ per vector and other construction parameters. Then the merged index preserves the same asymptotic space and time complexities as separate indexes. 
\end{theorem}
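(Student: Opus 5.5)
The argument splits into space and time, and both reduce to one structural fact: the construction cost of the graph indexes we consider is superadditive. That is, for the per-node construction cost $T(n)$ of an index over $n$ points,
\[
T(n_x)+T(n_y)\le T(n_x+n_y)\le c\,\bigl(T(n_x)+T(n_y)\bigr)
\]
for a constant $c$ independent of $n_x,n_y$, together with a linear space bound driven by $m$. We set $n=n_x+n_y$ and model a graph index (Definition~\ref{def:graph}) as vectors plus adjacency lists of length at most $m$, as for NSG and HNSW with fixed parameters.

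\textbf{Space.} The separate design stores $n_x+n_y$ vectors of dimension $d$ and at most $m(n_x+n_y)$ directed edges, plus two entry points. This gives $\Theta(nd)$ for the vectors and $O(nm)$ for the edges. The merged index $G_{X\cup Y}$ stores the same $n$ vectors and at most $mn$ edges. Each node also carries one set-label bit, which adds $O(n)$, and there is one entry point. Both designs are therefore $\Theta(n(d+m))$, and the ratio between them is bounded by a constant.

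For HNSW, the layered structure gives expected $O(nm)$ edges in total, since layer sizes decay geometrically. That expectation depends only on $n$, so the same bound holds for both designs.

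\textbf{Time.} We write the construction cost as $n\cdot C(n)$, where $C(n)$ is the cost of one insertion: a search over the partially built graph plus neighbor pruning. For HNSW-style incremental insertion with fixed beam size and degree bound $m$, the standard assumption in \cite{HNSW} gives $C(n)=O(m\,d\log n)$.

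With this, the separate cost is $n_xC(n_x)+n_yC(n_y)$ and the merged cost is $nC(n)$. Since $C$ is nondecreasing and $\log n\le \log n_y+1$ when $n_x\le n_y$, we get
\[
n\,C(n)\le n\,O\bigl(md(\log n_y+1)\bigr),
\]
which matches $n_y C(n_y)$ up to constants. This gives both directions of the superadditivity inequality for HNSW.

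For NSG, construction is dominated by building a kNN graph, followed by pruning each node's candidate set under the RNG rule, which costs $O(n\,\ell\, m d)$ for candidate pool size $\ell$. If the kNN graph is built with NN-descent at the empirical $O(n^{1.14})$, the cost is a polynomial $n^{\alpha}$ with $\alpha\ge1$. Then
\[
n_x^\alpha+n_y^\alpha\le n^\alpha\le 2^{\alpha-1}\bigl(n_x^\alpha+n_y^\alpha\bigr),
\]
by convexity, which again gives a constant-factor equivalence.

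\textbf{Main obstacle.} The hardest step is justifying that "same parameters" genuinely yields the same per-insertion or per-node cost regardless of which set the points come from. Graph-index complexity statements are heuristic and rest on assumptions about the data distribution; merging two distributions (the OOD case) could in principle change search lengths. I would handle this by stating the theorem relative to the assumed complexity function $T(n)$ of the underlying index, applied to the union as just another point set.

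The claim then needs only two properties of $T$: it is monotone, and it satisfies $T(2n)=O(T(n))$. Both hold for all the polynomial and polylog-times-linear forms above, which makes the result index-agnostic in the same sense as the rest of the framework.
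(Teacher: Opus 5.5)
Your proof is correct, but it takes a different route from the paper's.

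The paper assumes the construction cost $T$ is convex. It argues that the ratio $T(n_x+n_y)/(T(n_x)+T(n_y))$ is therefore largest at the balanced split $n_x=n_y=n'$. It then plugs in the cited costs, $O(mn\log n)$ for HNSW and $O(k\,n^{(1+d)/d}\log n^{1/d}+f(n))$ for NSG, to get explicit ratios $1+\log 2/\log n'$ and $2^{1/d}\log(2n')/\log n'\le 1.11$. Space is handled by the one-line identity $O(mn_x)+O(mn_y)=O(m(n_x+n_y))$.

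You avoid convexity entirely. For HNSW you use $n\le 2n_y$ and monotonicity of the per-insertion cost. For polynomial costs you use the power-mean bound $2^{\alpha-1}$. Finally you abstract to two properties, ``monotone'' and $T(2n)=O(T(n))$. From these, both $T(n_x+n_y)\le T(2\max(n_x,n_y))\le c\,T(\max(n_x,n_y))\le c\,(T(n_x)+T(n_y))$ and $T(n_x)+T(n_y)\le 2T(n_x+n_y)$ follow for any index. Your opening superadditivity framing is stronger than these two properties deliver, but the weaker lower bound is all the $\Theta$-equivalence needs.

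\textbf{What your route buys.} It applies to any cost function with the doubling property and needs no convexity assumption. It states both directions of the equivalence explicitly. It counts vector storage in the space bound. Together these make the index-agnostic claim well founded.

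\textbf{What it gives up.} It loses the sharp constant. The doubling argument only certifies some $O(1)$ factor. The paper's balanced-case computation shows the factor is close to $1$, which is what supports the ``negligible overhead'' reading and matches the measured overhead below $5\%$.

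\textbf{One gap in your NSG model.} It omits the per-node search on the kNN graph that collects the RNG candidates; that search is the $n^{(1+d)/d}\log n^{1/d}$ term the paper cites. That term also satisfies the doubling condition, so your conclusion is unaffected. Moreover, applying your $2^{\alpha-1}$ bound with $\alpha=1+1/d$ reproduces exactly the paper's $2^{1/d}$ factor.
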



\begin{proof}

For the space complexity:

\[
\underbrace{O(mn_x)+O(mn_y)}_{\text{separate indexes}}
\;=\;
\underbrace{O(m(n_x + n_y))}_{\text{merged index}}.
\]
Hence, the merged index adds no asymptotic space overhead over separate indexes.

The construction time of the index with $n$ vectors is
\[
T(n) = O(mn\log n) \cite{HNSW} \, \mathrm{ or } \, O\!\left(k\,n^{\frac{1+d}{d}}\log n^{1/d}+f(n)\right) \cite{NSG},
\]
which slightly differs according to the underlying index, where $d$ is the intrinsic dimensionality (we overuse the notation $d$ here), $k$ is the candidate neighborhood size used in construction, and $f(n)$ is the cost of building the initial $k$NN graph. 

Since $T$ is convex over $n$, the time complexity ratio $\alpha = \frac{T(n_x + n_y)}{T(n_x) + T(n_y)}$ between the merged and separate indexes becomes the highest when $n_x = n_y = n'$. For the former $T(n)$ above of $O(mn\log n)$,
\[
\alpha \leq \frac{T(2n')}{2T(n')} = \frac{2mn'\log 2n'}{2mn'\log n'} = \frac{\log 2n'}{\log n'} = 1 + \frac{\log 2}{\log n'}.
\]
Hence, for a sufficiently large $n'$, the second term is negligible (e.g., 0.05 for $n' = 10^6$). For the latter $T(n)$ above, when its first term subsumes the second term $f(n)$,
\[
\alpha
\leq
\frac{T(2n')}{2T(n')} =
\frac{(2n')^{1+\frac{1}{d}} \log(2n')}
{2\, n'^{1+\frac{1}{d}} \log n'}
=
2^{\frac{1}{d}} \cdot \frac{\log(2n')}{\log n'}.
\]
Assuming $d \geq 12.9$ in typical datasets \cite{NSG}, $2^{1/d}$ is lower than 1.055. Including the previously computed $\frac{\log 2n'}{\log n'}$ around 1.05, $\alpha \leq 1.11$.
\end{proof}
}

\revisionA{}{In the experiments, we show that the merged index adds negligible overheads in practice, less than 5\% even under $n_x = n_y$.}

\section{Existence of Closest Point in Local Neighborhood}
\label{subsec:tech:deployment_scenario}

\revisionA{}{

\begin{theorem}
\textbf{(Existence of top-1 closest node in local neighborhood under RNG)}
Let $u$ be a node in a relative-neighborhood graph (RNG), and let $v$ be the closest node to $u$. Then $v \in N_u$, the neighborhood of $u$.
\end{theorem}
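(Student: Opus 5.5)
The plan is a direct proof by contradiction from the defining property of an RNG edge that the paper recalls in Section~\ref{subsec:ours:mi}. In an RNG on a point set $P$, an edge $(u,v)$ is present iff there is no witness $w\in P\setminus\{u,v\}$ with $\dist(u,w)<\dist(u,v)$ and $\dist(v,w)<\dist(u,v)$, i.e., the open lune of $u$ and $v$ contains no other point. Accordingly, I would state the strict-inequality version of this condition explicitly and then argue directly from it.

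The key steps, in order, are as follows.

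\begin{enumerate}
\item Let $v=\arg\min_{w\neq u}\dist(u,w)$, where $w$ ranges over the node set. Suppose for contradiction that $v\notin N_u$.
\item By the RNG rule there is then some witness $w\ne u,v$ with $\dist(u,w)<\dist(u,v)$ and $\dist(v,w)<\dist(u,v)$.
\item The first inequality alone already contradicts the minimality of $v$ as the closest node to $u$. Hence no witness exists, and the edge $(u,v)$ is present.
\end{enumerate}

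I would remark that only the first half of the lune condition is needed. This is why the nearest-neighbor edge survives RNG pruning, and likewise the occlusion rules used by NSG and HNSW. Those rules discard an edge $(u,v)$ only when some already-selected neighbor $w$ satisfies $\dist(u,w)<\dist(u,v)$, so the nearest candidate is always kept first.

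The main obstacle is not the inequality argument but bookkeeping about which graph the statement refers to. I expect three subtleties.

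\begin{itemize}
\item \emph{Ties.} If several nodes are equidistant to $u$, the strict-inequality RNG keeps every edge to each of them. A non-strict convention could drop all of them, so I would fix the strict convention, or assume distinct distances, which holds in general position.
\item \emph{Direction.} Edges in this paper are directed, so I would define $N_u$ as the out-neighborhood produced by pruning $u$'s candidate list, and note that the argument applies to that list.
\item \emph{Approximation.} Practical indexes only approximate the RNG. They prune from a candidate pool (an approximate $k$NN graph or search results) and cap the degree at $m$. The guarantee therefore transfers only when the true nearest neighbor enters the candidate pool; the degree cap $m$ is harmless because the nearest candidate is selected first. This is exactly the ``highly likely'' qualifier in Section~\ref{subsec:ours:mi}.
\end{itemize}

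Finally, I would connect the result to the merged index. Applying the statement to the RNG over $X\cup Y$ gives only the overall closest node, which might lie in $X$. To obtain the closest \emph{data} point, I would apply the same argument with minimality restricted to $Y$. If $v_Y$ is the closest point of $Y$ to $u$, a witness $w$ blocking the edge $(u,v_Y)$ is strictly closer to $u$. If that witness lies in $Y$, this contradicts the choice of $v_Y$; if it lies in $X$, the guarantee fails, and the argument yields only a short two-hop path through $w$. I would state this as a corollary that matches the paper's caveat about queries with no neighboring data point.
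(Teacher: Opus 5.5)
Your proof is correct and is the same argument as the paper's: if $(u,v)$ were absent, the RNG rule would supply a witness $w$ with $\dist(u,w)<\dist(u,v)$, and that inequality alone contradicts the minimality of $v$. One correction to your optional corollary for the closest data point $v_Y$: a blocking witness $w\in X$ does not give a two-hop path $u\to w\to v_Y$, because the edges $(u,w)$ and $(w,v_Y)$ can themselves be blocked (for example, collinear points $u=0$, $x_1=1$, $x_2=2$ in $X$ and $y=3$ in $Y$ place $v_Y$ three hops from $u$), so recursing only yields a path whose edges are all strictly shorter than $\dist(u,v_Y)$.
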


\begin{proof}
By the RNG property \cite{RNG}, an edge $(u,v)$ exists if and only if there is no other node $w$ such that
\[
\dist(u,w) < \dist(u,v)
\quad\text{and}\quad
\dist(v,w) < \dist(v,u).
\]
Assume $v$ is the closest node to $u$. Then there cannot exist any $w$ with $\dist(u,w) < \dist(u,v)$ (Figure \ref{fig:rng}), since that would contradict the assumption that $v$ is the closest node to $u$. Hence the above condition is impossible, and therefore $(u,v)$ must be an edge in the RNG. Thus, the closest node $v$ lies in the local neighborhood $N_u$ of $u$.
\end{proof}

This theorem explains why probing the local neighborhood of a query node is often sufficient under the merged index. Existing graph indexes such as NSG are explicitly derived from RNG-style pruning, and HNSW- and Vamana-style graphs also approximate this locality-preserving behavior in practice \cite{NSG, HNSW, DiskANN}. Therefore, once a query node is inserted into the merged graph, its neighborhood is likely to expose the nearest data-side node, or at least a point very close to it. This is why the merged-index execution can often stop the greedy phase after probing the query neighborhood only.

\begin{figure}[t]
\centering
\includegraphics[width=0.4\columnwidth]{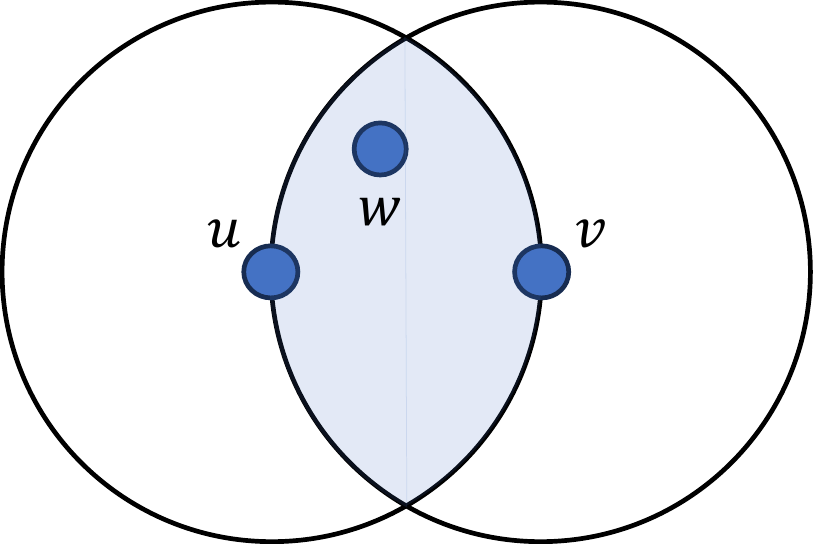}
\caption{Illustration of a common pruning rule used in graph-based indexes. $u$, $v$, and $w$ are vectors, and the circles are centered at $u$ and $v$ with radius $\dist(u,v)$. $u$ and $v$ are connected if and only if there is no $w$ in the shaded lune such that $\dist(u,w) < \dist(u,v)$ and $\dist(v,w) < \dist(v,u)$.}
\label{fig:rng}
\end{figure}

The main caveat is that an exact RNG is too expensive to construct, and practical graph indexes only approximate this property. Accordingly, a query may occasionally have no neighboring data point in its immediate local neighborhood. However, such cases are rare in practice, and modern graph indexes such as HNSW, Vamana, and NSG still preserve this local-neighborhood intuition well enough to make the merged-index design effective. This also aligns with our broader design principle of keeping the solution simple and portable across existing graph indexes, rather than introducing yet another bespoke index structure specialized to one workload.
 
}

\section{Varying Convergence Threshold}
\label{subsec:appendix:conv_threshold}

\revisionA{R1.D2}{

Figure \ref{fig:sensitivity_ES} shows that, the value of 4-16 for the \texttt{conv\_threshold} parameter in Table \ref{tab:cfg} in Section \ref{sec:ours} achieves a good latency-recall trade-off. This justifies our choice of using 10 as the default value.

\begin{figure}[t]
\centering
\includegraphics[width=0.6\columnwidth]{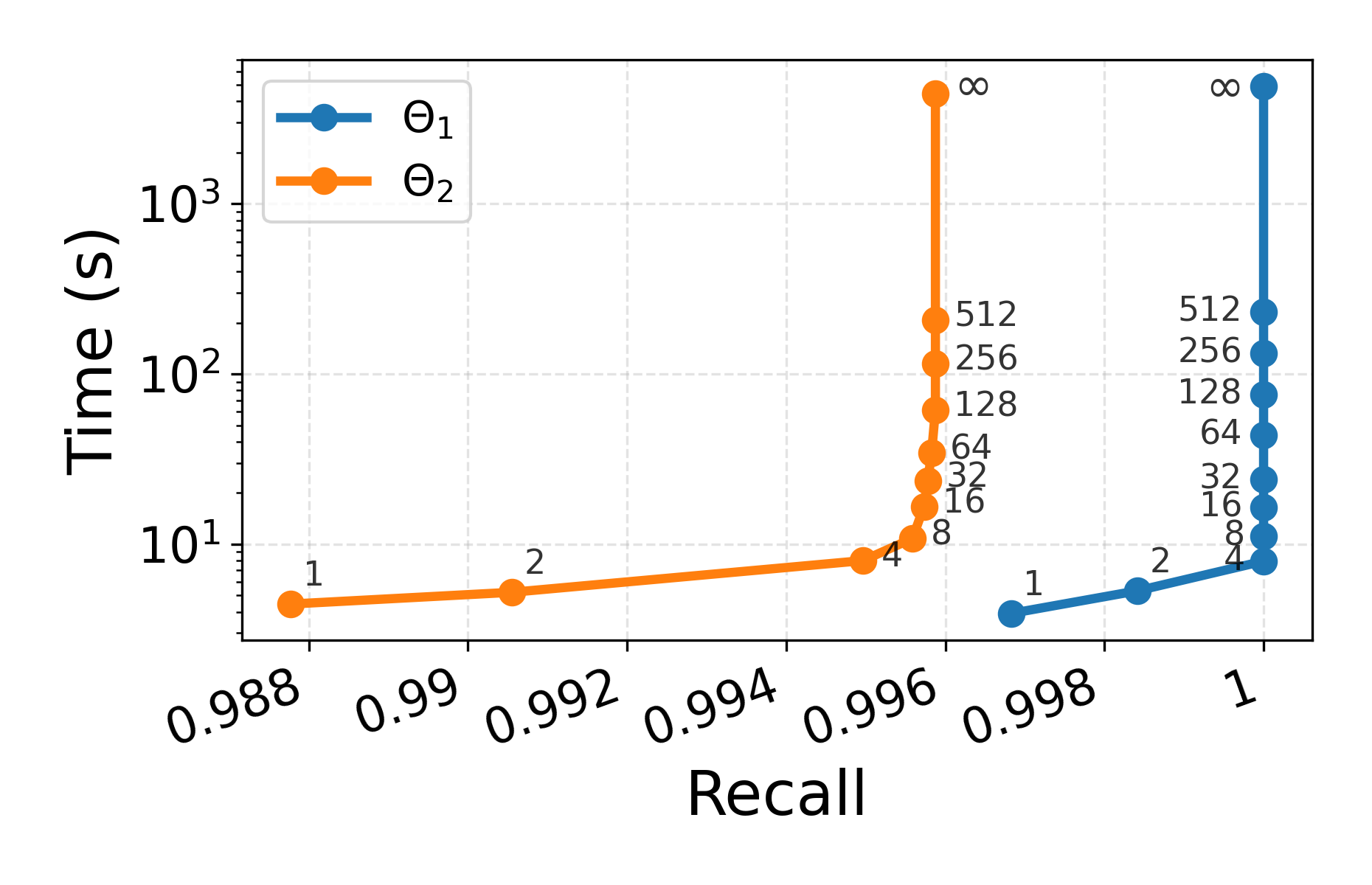}
\caption{Sensitivity analysis on varying \texttt{conv\_threshold} values (point labels) in Section \ref{sec:ours} using \SIFT dataset and two smallest thresholds.}
\label{fig:sensitivity_ES}
\end{figure}

}

\section{Discussion on Large Thresholds and Query-Data Symmetry}

\sigmod{As mentioned in Section \ref{sec:exp}, this paper focuses on small distance thresholds to align with typical top-$k$ search with small $k$ values. For large thresholds, all evaluated methods are bottlenecked by the post-greedy expansion (Figure \ref{fig:profiling_exp}). However, from the query-data symmetry for vector join, i.e., $X \Join_{dist < \theta} Y = Y \Join_{dist < \theta} X$, we observed that switching $X$ and $Y$ can reduce latency by orders of magnitude for large thresholds (and rather increase for small thresholds). This attributes to reduced random memory accesses during the post-greedy graph traversal, by sequentially iterating over the larger set $Y$ and traversing inside the smaller graph $G_X$ (or the $X$-part in $G_{X \cup Y}$). This performance gain is analogous to favoring the sequential scan over the index scan under large selectivities.}
\clearpage






\end{document}
\endinput